
%
%
%
\documentclass[reqno,centertags,11pt]{amsart}
\usepackage{amsmath,amsthm,amscd,amssymb} \usepackage{latexsym}
\usepackage{graphicx}
\usepackage{color}

\addtolength{\textwidth}{0.6in} \addtolength{\oddsidemargin}{-0.3in}
\addtolength{\evensidemargin}{-0.3in}

 
 \newcommand{\R}{{\mathbb{R}}}
 \newcommand{\C}{{\mathbb{C}}}
 
 \newcommand{\Z}{{\mathbb{Z}}}


 {

\newcommand{\beq}{\begin{equation}}
\newcommand{\eeq}{\end{equation}}
\newcommand{\bdm}{\begin{displaymath}}
\newcommand{\edm}{\end{displaymath}} \newcommand{\ba}{\begin{align}}
\newcommand{\ea}{\end{align}} \newcommand{\bpf}{\begin{proof}}
\newcommand{\epf}{\end{proof}}












 \allowdisplaybreaks

\DeclareMathOperator{\Tr}{Tr}


\newtheorem{theorem}{Theorem}[section]
\newtheorem{lemma}[theorem]{Lemma}
\newtheorem{corollary}[theorem]{Corollary}

\theoremstyle{definition}
\newtheorem{definition}[theorem]{Definition}
\theoremstyle{remark}
\newtheorem{remark}[theorem]{Remark}








\begin{document}

\title[Solutions of NPE with Periodic Potential]{ Solutions of  Gross-Pitaevskii Equation with Periodic Potential in Dimension Two.}
\author[Yu. Karpeshina, Seonguk  Kim,  R.Shterenberg]{Yulia Karpeshina, Seonguk Kim,  Roman Shterenberg}
%


\address{Department of Mathematics, Campbell Hall, University of Alabama at Birmingham,
1300 University Boulevard, Birmingham, AL 35294.}
\email{karpeshi@uab.edu}%

\address{Department of Mathematics, Julian Science and Math Center,
Depauw University,
Greencastle, IN 46135.}%
\email{seongukkim@depauw.edu}

\address{Department of Mathematics, Campbell Hall, University of Alabama at Birmingham,
1300 University Boulevard, Birmingham, AL 35294.}
\email{shterenb@uab.edu}%

\address{}
\email{}

\thanks{Supported in part by NSF-grants DMS- 1814664 (Y.K. and R.S) }

\date{\today}


\maketitle
\begin{abstract} Quasi-periodic solutions of  a   nonlinear polyharmonic equation for the case $4l>n+1$ in $\R^n$, $n>1$, are studied.  This includes  Gross-Pitaevskii equation in dimension two ($l=1,n=2$). It is proven that there is an extensive "non-resonant"  set ${\mathcal G}\subset \R^n$ such that for every $\vec k\in \mathcal G$ there is a  solution  asymptotically close to a plane wave 
$Ae^{i\langle{ \vec{k}, \vec{x} }\rangle}$ as $|\vec k|\to \infty $, given $A$ is sufficiently small.\end{abstract}
\section{Introduction}
Let us consider a nonlinear polyharmonic equation with a periodic potential $V(\vec{x})$ and quasi-periodic boundary condition:
\begin{equation}\label{main equation, 4l>n+1}
(-\Delta)^{l}u(\vec{x})+V(\vec{x})u(\vec{x})+\sigma |u(\vec{x})|^{2}u(\vec{x})=\lambda u(\vec{x}),~\vec{x}\in [0,2\pi]^{n},
\end{equation}
\begin{equation}\label{main condition, 4l>n+1}
\begin{cases}
~u(x_{1},\cdots,\underbrace{2\pi}_{s-th},\cdots,x_{n})=e^{2\pi it_{s}}u(x_{1},\cdots,\underbrace{0}_{s-th},\cdots,x_{n}),\\
~\frac{\partial}{\partial x_{s}}u(x_{1},\cdots,\underbrace{2\pi}_{s-th},\cdots,x_{n})=e^{2\pi it_{s}}\frac{\partial}{\partial x_{s}}u(x_{1},\cdots,\underbrace{0}_{s-th},\cdots,x_{n}),\\
\vdots\\
~\frac{\partial^{2l-1}}{\partial x_{s}^{2l-1}}u(x_{1},\cdots,\underbrace{2\pi}_{s-th},\cdots,x_{n})=e^{2\pi it_{s}}\frac{\partial^{2l-1}}{\partial x_{s}^{2l-1}}u(x_{1},\cdots,\underbrace{0}_{s-th},\cdots,x_{n}),\\
~~~~s=1,\cdots,n.
\end{cases}
\end{equation}
where $l$ is an integer, $4l>n+1$, $\vec{t}=(t_{1},\cdots,t_{n}) \in K=[0,1]^{n}$, $\sigma$ is a real number  and $V(\vec{x})$ is a trigonometric polynomial and $\int_{Q} V(\vec{x})d\vec{x}=0,$ $Q=[0,2\pi]^{n}$ being the elementary cell of period $2\pi$. More precisely,
\begin{equation}\label{periodic potential, 4l>n+1}
V(\vec{x})=\sum_{q\neq0, |q|\leq R_{0}}v_{q}e^{i\langle{ q, \vec{x} }\rangle},
\end{equation}
$v_{q}$ being Fourier coefficients.

When $l=1$, $n=1,2,3$, equation \eqref{main equation, 4l>n+1} is a famous Gross-Pitaevskii equation for Bose-Einstein condensate, see e.g. \cite{PS08}.
%
%
In physics papers, e.g. \cite{KS02}, \cite{LO03}, \cite{YB13}, \cite{YD03}, a big variety of numerical computations for Gross-Pitaevskii equation is made. 
However, they are restricted to the one dimensional case and there is a lack of theoretical considerations even for the case $n=1$. In this paper we study the case $4l>n+1$ which includes  $l=1,n=2$. 
%
%

The goal of this paper is to construct asymptotic formulas for $u(\vec{x})$ as $\lambda \to \infty $.
We show that there is an extensive "non-resonant" set ${\mathcal G}\subset \R^n$ such that for every $\vec k\in \mathcal G$ there is a quasi-periodic solution of (\ref{main equation, 4l>n+1}) close to a plane wave 
$Ae^{i\langle{ \vec{k}, \vec{x} }\rangle}$ with $\lambda=\lambda (\vec k, A)$ close to $|\vec k|^{2l}+\sigma |A|^2$ as $|\vec k|\to \infty $ (Theorem \ref{main theorem 4l>n+1}). We assume $A\in \C$ and $|A|$ is sufficiently small:
\begin{equation}|\sigma| |A|^2<\lambda ^{\gamma  }, \ \ \gamma <2l-n.
\label{A}
\end{equation}
Note that $\gamma $ is any negative number for the  Gross-Pitaevskii equation $l=1, n=2$.
The quasi-momentum $\vec t$ in \eqref{main equation, 4l>n+1}
is defined by the formula: $\vec k=\vec t+2\pi j$, $j\in \Z^n$.

We show that the non-resonant set $\mathcal G$ has an asymptotically full measure in $\R^n$:
\begin{equation}
\lim _{R\to \infty}\frac{\left| \mathcal G\cap B_R\right|_n}{|B_R|_n}=1, \label{full}
\end{equation}
where $B_R$ is a ball of radius $R$ in $\R^n$ and $|\cdot |_n$ is Lebesgue measure in $\R^n$.

Moreover,
we investigate a set $\mathcal{D}_{}(\lambda, A)$ of vectors $\vec k\in \mathcal G$, corresponding to a fixed sufficiently large $\lambda $ and a fixed $A$. The set $\mathcal{D}_{}(\lambda, A)$,
defined as a level (isoenergetic) set for $\lambda _{}(\vec
k, A)$, 
\begin{equation} 
{\mathcal D} _{}(\lambda,A)=\left\{ \vec k \in \mathcal{G} _{} :\lambda _{}(\vec k, A)=\lambda \right\},\label{isoset} 
\end{equation}
is proven to be a slightly distorted $n$-dimensional sphere with  
a finite number of holes (Theorem \ref{iso}). For any sufficiently large $\lambda $, it can be described by the formula:
\begin{equation} {\mathcal D}_{}(\lambda, A)=\{\vec k:\vec
k=\varkappa _{}(\lambda, A,\vec{\nu})\vec{\nu},
\ \vec{\nu} \in {\mathcal B}_{}(\lambda)\}, \label{D}
\end{equation}
where ${\mathcal B}_{}(\lambda )$ is a subset of the unit
sphere $S_{n-1}$. The set ${\mathcal B}_{}(\lambda )$ can be
interpreted as a set of possible directions of propagation for 
almost plane waves. Set ${\mathcal B}_{
}(\lambda )$ has an asymptotically full
measure on $S_{n-1}$ as $\lambda \to \infty $:
\begin{equation}
\left|{\mathcal B}_{}(\lambda )\right|=_{\lambda \to \infty
}\omega _{n-1} +O\left(\lambda^{-\delta }\right), \ \ \delta >0,\label{B}
\end{equation}
here $|\cdot |$ is the standard surface measure on $S_{n-1}$, $\omega _{n-1} =|S_{n-1}|$.
The value $\varkappa _{}(\lambda ,A,\vec \nu )$ in (\ref{D}) is
the ``radius" of ${\mathcal D}_{}(\lambda,A)$ in a direction
$\vec \nu $. The function $\varkappa _{}(\lambda ,A,\vec \nu
)-(\lambda-\sigma |A|^2)^{1/2l}$ describes the deviation of ${\mathcal
D}_{}(\lambda,A)$ from the perfect sphere (circle) of the radius
$(\lambda-\sigma |A|^2)^{1/2l}$ in $\R^n$. It is proven that the deviation is asymptotically
small:
\begin{equation} \varkappa _{}(\lambda ,A, \vec \nu
)=_{\lambda \to \infty} \left(\lambda-\sigma |A|^2\right)^{1/2l}+O\left(\lambda ^{-\gamma _1}\right),\ \ \gamma _1>0.
\label{h}
\end{equation}

To prove the results above, we consider the term $V+\sigma|u|^{2}$ in equation (\ref{main equation, 4l>n+1}) as a periodic potential and formally change the nonlinear equation to a linear equation with an unknown potential $V(\vec{x})+\sigma |u(\vec{x})|^{2}$:
\begin{equation*}
(-\Delta)^{l}u(\vec{x})+\big(V(\vec{x})+\sigma |u(\vec{x})|^{2}\big)u(\vec{x})=\lambda u(\vec{x}).
\end{equation*} 
Further, we use known results  for linear polyharmonic equations with periodic potentials.
To start with, we consider a linear operator in $L^{2}(Q)$ 
described by the formula:
\begin{equation}H(\vec t)=(-\Delta)^{l}+V,\label{linear oper 4l>n+1}\end{equation}
and quasi-periodic boundary condition (\ref{main condition, 4l>n+1}).
The free operator $H_{0}(\vec t)$, corresponding to $V=0$, has eigenfunctions given by:
\begin{equation}\psi_{j}(\vec{x})=e^{i\langle{ \vec{p}_{j}(\vec t), \vec{x} }\rangle},~~\vec{p}_{j}(\vec t):=\vec t+2\pi j,~j \in \mathbb{Z}^{n},~\vec t \in K,\label{0}\end{equation}
and the corresponding eigenvalues $p_{j}^{2l}(\vec t):=|\vec{p}_{j}(\vec t)|^{2l}$. 
Perturbation theory for a linear operator $H(\vec t)$ with a periodic potential $V$ is developed in 
\cite{K97}. It is shown that at high energies, there is an extensive set of generalized eigenfunctions being close to plane waves. Below (See Theorem \ref{thm1 4l>n+1}), we describe this result in details.
Now, we define a map ${\mathcal M}: L^{\infty}(Q) \rightarrow L^{\infty}(Q)$ by the formula:
\begin{align}\label{def of A 4l>n+1}
{\mathcal M}W(\vec{x})=V(\vec{x})+\sigma|u_{\tilde{W}}(\vec{x})|^{2}.
\end{align}
Here, $\tilde{W}$ is a shift of $W$ by a constant such that $\int_{Q}\tilde{W}(\vec{x})d\vec{x}=0$,
\begin{align}\label{tilde W 4l>n+1}
\tilde{W}(\vec{x})=W(\vec{x})-\frac{1}{(2\pi)^{n}}\int_{Q}W(\vec{x})d\vec{x},
\end{align}
and $u_{\tilde{W}}$ is an eigenfunction of the linear operator $(-\Delta)^{l}+\tilde{W}$ 
with the boundary condition (\ref{main condition, 4l>n+1}). 
Next, we consider a sequence $\{W_{m}\}_{m=0}^{\infty}$:
\begin{align}\label{def of successive sequence A 4l>n+1}
W_{0}=V+\sigma |A|^2,\ \ \ {\mathcal M}W_{m}=W_{m+1}.
\end{align}
Note that the sequence is well-defined by induction, since  for each $m=1,2,3,\dots$ and $\vec t$ in a non-resonant set ${\mathcal G}$ described in Section 2, there is an eigenfunction $u_{{m}}(\vec{x})$ corresponding to the potential $\tilde{W}_{m}$:
%
$$H_{{m}}(\vec t)u_{{m}} =\lambda_{{m}}u_{{m}},$$
$$H_{{m}}(\vec t)u_{{m}} :=(-\Delta)^{l}u_{{m}}+\tilde{W}_{m}u_{{m}},$$
 $\lambda_{{m}}$, $u_{m}$ being defined by formal series of the form \eqref{series pj 4l>n+1}--\eqref{series E formula 4l>n+1}, \eqref{def of u 2l>n} with $\tilde W_m$ instead of $V$. Those series are proven to be convergent, thus justifying our construction. 
Next, we prove that the sequence $\{W_{m}\}_{m=0}^{\infty}$ is a Cauchy sequence of periodic functions in $Q$ with respect to a norm 
\begin{align}\label{def of star norm 4l>n+1}
\|W\|_{*}=\sum_{q\in \mathbb{Z}^{n}}|w_{q}|,
\end{align}
$w_{q}$ being Fourier coefficients of $W$.
%
This implies that 
$$W_{m}\rightarrow W~\mbox{with respect to the norm}~\|\cdot\|_{*},~W\mbox{ is a periodic function}.$$
Further, we show that 
$$u_{{m}}\rightarrow u_{\tilde W} ~\mbox{in}~L^{\infty}(Q),\ \ \  \ \  \lambda_{{m}}\rightarrow \lambda_{\tilde W} ~\mbox{in}~\mathbb{R},$$
where $u_{\tilde W}$, $ \lambda_{\tilde W}$ correspond to the potential ${\tilde W}$ (via \eqref{series pj 4l>n+1}--\eqref{series E formula 4l>n+1}, \eqref{def of u 2l>n} with $\tilde W$ instead of $V$).
%
It follows from (\ref{def of A 4l>n+1}) and (\ref{def of successive sequence A 4l>n+1}) that ${\mathcal M}W=W$ and, hence, $u_{}:=u_{\tilde W}$ solves the nonlinear equation 
 (\ref{main equation, 4l>n+1}) with quasi-periodic boundary condition (\ref{main condition, 4l>n+1}). 
%

Note, that the results of the paper can be easily generalized for the case of  a sufficiently smooth potential $V(x)$. Generalization for the case $l=1, n=3$ (Gross-Pitaevskii equation in dimension three) is also possible. However, it requires more subtle considerations than here and will be done in a forthcoming paper.

The paper is organized  as follows. In Section 2, we introduce  results for the linear operator $(-\Delta )^l+V$ which include the perturbation formulas for an eigenvalue and its spectral projection. In Section 3, we prove  existence of  solutions of the equation \eqref{main equation, 4l>n+1} with boundary condition \eqref{main condition, 4l>n+1}
and investigate their properties. Isoenergetic surfaces are also introduced and described there.
%
%
%
%
%
%
%
%
%
\section{Linear Operator}

Let us consider an operator
\begin{equation}\label{linear eq 4l>n+1 in R}
H=(-\Delta)^{l}+V,
\end{equation} 
in $L^{2}(\mathbb{R}^{n})$, $4l>n+1$ and $n \geq 2$ where $l$ is an integer and $V(\vec{x})$ is defined by \eqref{periodic potential, 4l>n+1}. Since  $V(\vec x)$ is periodic with an elementary cell $Q$,  the spectral study of (\ref{linear eq 4l>n+1 in R}) can be reduced to that of a family of Bloch operators $H(\vec t)~\mbox{in}~L^{2}(Q),~\vec t\in K$ (see formula \eqref{linear oper 4l>n+1} and quasi-periodic conditions (\ref{main condition, 4l>n+1})).

The free operator $H_{0}(\vec t)$, corresponding to $V=0$, has eigenfunctions given by \eqref{0}
%
%
and the corresponding eigenvalue is $p_{j}^{2l}(\vec t):=|\vec{p}_{j}(\vec t)|^{2l}$. Next, we describe an isoenergetic surface of $H_0(\vec t)$ in $K$.
To start with, we consider the sphere $S(k)$ of radius $k$ centered at the origin in $\R^n$.  For each $j\in \Z^n$ such that $j+K\cap S(k)\neq \emptyset $, $K:=[0,1]^{n}$, we translate the corresponding piece of $S(k)$ into $K$, thus obtaining the sphere of radius $k$ ``packed" into $K$.
We denote it by $S_{0}(k)$. Namely, 
$$S_{0}(k)=\big\{\vec t \in K:~ \mbox{there is a }~j\in \mathbb{Z}^{n}~\mbox{such that}~p_{j}^{2l}(\vec t)=k^{2l}\big\}.$$  
Obviously,  operator $H_{0}(\vec t)$ has an eigenvalue equal to $k^{2l}$ if and only if $\vec t \in S_{0}(k)$. For this reason, $S_{0}(k)$ is called an isoenergetic surface of $H_{0}(\vec t)$.
When $\vec t$ is a point of self-intersection of $S_{0}(k)$, there exists $q\neq j$ such that 
\begin{align}\label{pq=pj 2l>n}
p_{q}^{2l}(\vec t)=p_{j}^{2l}(\vec t).
\end{align}
In other words,  there is a non-simple eigenvalue of $H_{0}(\vec t)$. We remove from  
the set $S_{0}(k)$  the $(k^{-n+1-\delta })$-neighborhoods of all self-intersections \eqref{pq=pj 2l>n}. We call the remaining set  a non-resonant set and denote is by  $\chi_{0}(k,\delta)$, The removed neighborhood of self-intersections is relatively small  and, therefore,   $\chi_{0}(k,\delta)$ has asymptotically full measure with respect to $S_{0}(k)$:
$$\frac{\left|\chi_{0}(k,\delta)\right|}{\left|S_{0}(k)\right|}=1+O(k^{-\delta/8}),$$
here and below $|\cdot |$ is Lebesgue measure of a surface in $\R^n$.
It can be easily shown that for any $\vec t\in \chi_{0}(k,\delta)$,  there is a unique $j\in \Z^n$ such that $p_j^{2l}(\vec t)=k^{2l}$ and
\begin{equation}\min _{q\neq j}\left|  p_q^{2l}(\vec t)-p_j^{2l}(\vec t)\right| >k^{2l-n-\delta }.\label{in} \end{equation}
This means that the distance from $p_{j}^{2l}(\vec t)$ to the nearest eigenvalue $p_{q}^{2l}(\vec t)$, $q\neq j$ is greater than $k^{2l-n-\delta }$. If $2l>n$, then this distance  is large and standard perturbation series can be constructed for $p_{j}^{2l}(\vec t)$, $t \in \chi_{0}(k,\delta)$.
 However, the denseness of the eigenvalues increases infinitely when $k\to \infty $ and $2l<n$. Hence, eigenvalues of the free operator $H_{0}(\vec t)$  strongly interact with each other  when $2l<n$, the case $2l=n$ being intermediate. Nevertheless,  the perturbation series for eigenvalues and their spectral projections  were constructed in \cite{K97} for $4l>n+1$ when $\vec t$ belongs to a non-resonant set $\chi_{1}$.
%


%
%
%
\begin{lemma}\label{lemma t 4l>n+1} For any $0<\beta<1$,   $0<2\delta<(n-1)(1-\beta)$ and  sufficiently large  $k>k_{0}(\beta, \delta)$,  
there is a non-resonant set $\chi_{1}(k,\beta,\delta)$ such that for any $t\in \chi_{1}(k,\beta,\delta)$ there is a unique $j\in \Z^n$: $p_j^{2l}(\vec t)=k^{2l}$ and
 if $\vec t$ is in the $(k^{-n+1-2\delta })$-neighborhood of  $\chi_{1}(k,\beta,\delta)$ in $K$,
then for $z \in C_{0}=\{z \in \mathbb{C}: |z-k^{2l}|=k^{2l-n-\delta}\}$ we have 
\begin{equation}\min _{i\in \Z^n}\left|p_i^{2l}(\vec t)-z\right| >k^{2l-n-\delta }, \label{in-a} \end{equation}
\begin{equation} 200|p_{i}^{2l}(\vec t)-z||p_{i+q}^{2l}(\vec t)-z|>k^{2\gamma_{2}},~~i \in \mathbb{Z}^{n},~|q| <k^{\beta},~q\neq0, \label{main}
\end{equation}
here and below:
\begin{align}\label{gamma 2 4l bigger n+1}
2\gamma_{2}=4l-n-1-\beta(n-1)-2\delta >0.
\end{align}
The non-resonant set $\chi_{1}(k,\beta,\delta)$ has an asymptotically full measure on $S_{0}(k)$:
\begin{equation*} 
\frac{s\Big(S_{0}(k)\setminus\chi _1(k,\beta,\delta )\Big)}{s\Big(S_{0}(k)\Big)}=O(k^{-\delta/8}).
\end{equation*}
\end{lemma}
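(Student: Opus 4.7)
The plan is to construct $\chi_{1}(k,\beta,\delta)$ by removing from $S_{0}(k)$ a union of resonance pieces on which either \eqref{in-a} or \eqref{main} could fail for some $z\in C_{0}$, and to bound the total measure removed. The removals are pre-inflated by an extra $k^{-n+1-2\delta}$ so that the conclusions of the lemma hold on the fat neighborhood rather than on $\chi_{1}$ itself.

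First, condition \eqref{in-a} is essentially inherited from $\chi_{0}(k,\delta)$. Since $|z-k^{2l}|=k^{2l-n-\delta}$ on $C_{0}$, once the self-intersections of $S_{0}(k)$ have been excised with a slightly enlarged margin one has $|p_{i}^{2l}(\vec t)-k^{2l}|>2k^{2l-n-\delta}$ for every $i$ other than the unique index $j$ with $p_{j}^{2l}(\vec t)=k^{2l}$, and the triangle inequality then yields \eqref{in-a} for all $i$ and all $z\in C_{0}$. The extra measure removed is of the same $O(k^{-\delta/8})$ order as in the construction of $\chi_{0}$.

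Condition \eqref{main} is the essential step. For each pair $(i,q)$ with $q\neq 0$ and $|q|<k^{\beta}$, I would define the bad set
\[
B(i,q)=\big\{\vec t\in K:\ \text{there is }z\in C_{0}\text{ with }200|p_{i}^{2l}(\vec t)-z|\,|p_{i+q}^{2l}(\vec t)-z|\leq k^{2\gamma_{2}}\big\}.
\]
If either $|\vec p_{i}(\vec t)|$ or $|\vec p_{i+q}(\vec t)|$ is much larger than $k$, the corresponding factor is of order $k^{2l}$, and combined with \eqref{in-a} the product already exceeds $k^{4l-n-\delta}\gg k^{2\gamma_{2}}$. So only the doubly-resonant regime, where both $|\vec p_{i}|$ and $|\vec p_{i+q}|$ are comparable to $k$, contributes. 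In that regime each factor is comparable (up to $k^{2l-1}$) to $\big|\,|\vec p_{i}(\vec t)|-k\,\big|$ plus the contour distance, and the product inequality forces $\vec p_{i}(\vec t)$ into a tube around the $(n-2)$-sphere $T(k,q):=S(k)\cap(S(k)-2\pi q)$, whose transverse width on $S(k)$ is a negative power of $k$ governed by $\gamma_{2}$; pre-inflating by $k^{-n+1-2\delta}$ gives the fat-neighborhood version.

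The main obstacle is then the combinatorial bookkeeping when summing over admissible $(i,q)$. For fixed $q$, the bad sets for different $i\in\mathbb{Z}^{n}$ are essentially disjoint inside $K$ (their centers $-2\pi i$ are separated by $2\pi$), so a coarea estimate on $S(k)$ produces an $(n-1)$-measure contribution of order $k^{n-2-\beta(n-1)-\delta}$, independent of the number of such $i$. Summing over the $O(k^{n\beta})$ choices of $q$ and dividing by $|S_{0}(k)|\sim k^{n-1}$ yields a relative measure of order $k^{-(1-\beta)-\delta}$, which the hypothesis $2\delta<(n-1)(1-\beta)$ keeps comfortably below $k^{-\delta/8}$. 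The delicate point is the overlap analysis for simultaneous resonances at several $q$'s and for tubes with nearly tangential intersection; this geometric bookkeeping, already carried out in detail in \cite{K97}, is the step I would follow.
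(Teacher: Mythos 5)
The paper itself does not prove Lemma~\ref{lemma t 4l>n+1}: it is imported from \cite{K97} (Karpeshina's monograph on perturbation theory for periodic Schr\"odinger operators), where the non-resonant set $\chi_1$ and estimates \eqref{in-a}, \eqref{main} are established for the case $4l>n+1$. Your sketch is a faithful high-level reconstruction of that argument: \eqref{in-a} follows from the $\chi_0$-type excision with a doubled margin and the triangle inequality with respect to $C_0$; \eqref{main} requires, in addition, excising a tube around each $(n-2)$-sphere $T(k,q)=S(k)\cap(S(k)-2\pi q)$ of transverse width $\sim k^{-\beta(n-1)-\delta}$ (which is exactly where both factors can be simultaneously close to their contour minimum), and your per-$q$ surface contribution $k^{n-2-\beta(n-1)-\delta}$ and the final relative measure $k^{-(1-\beta)-\delta}$ after summing over $|q|<k^\beta$ are correct, comfortably dominated by the $O(k^{-\delta/8})$ loss already incurred for $\chi_0$.

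One small point of emphasis: the hypothesis $0<2\delta<(n-1)(1-\beta)$ is not what makes the measure estimate close — that works for any $\beta<1$, $\delta>0$ — but rather what guarantees $\gamma_2>0$ in \eqref{gamma 2 4l bigger n+1}, i.e.\ that the lower bound $k^{2\gamma_2}$ in \eqref{main} actually grows with $k$ and hence yields the decay of the perturbation series in Theorem~\ref{thm1 4l>n+1}. Apart from that attribution, the sketch captures the geometric mechanism used in \cite{K97}; the overlap bookkeeping you defer is exactly what \cite{K97} carries out and is what the paper relies on by citation.
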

%
%
%
\begin{theorem}\label{thm1 4l>n+1} 
Under the conditions of Lemma \ref{lemma t 4l>n+1}, 
there exists a single eigenvalue of the operator $H(\vec t)$ 
in
the interval 
$\varepsilon (k,\delta )\equiv (k^{2l}-k^{2l-n-\delta },
k^{2l}+k^{2l-n-\delta })$. 
It is given by the series:
\begin{equation}\label{series pj 4l>n+1}
\lambda (\vec t)=p_j^{2l}(\vec t)+\sum _{r=2}^{\infty }g_r(k,t), 
\end{equation}
converging absolutely, where the index $j$
is uniquely determined from the relation 
$p_j^{2l}(\vec t)\in \varepsilon (k,\delta )$ and
\begin{equation}\label{series pj formula 4l>n+1}
g_{r}(k,\vec t)=\frac{(-1)^{r}}{2\pi ir}\Tr \oint_{C_{0}} \left((H_{0}(\vec t)-z)^{-1}V\right)^{r}dz.
\end{equation}
The spectral projection, corresponding to $\lambda (\vec t)$ is given by 
the series:
\begin{equation}\label{series E 4l>n+1}
E( t)=E_j+\sum _{r=1}^{\infty }G_r(k,t), 
\end{equation}
which converges in the trace class $S_1$ uniformly,
where
\begin{align}\label{series E formula 4l>n+1}
G_{r}(k,\vec t)=\frac{(-1)^{r+1}}{2\pi i}\oint_{C_{0}}\left((H_{0}(\vec t)-z)^{-1}V\right)^{r}(H_{0}(\vec t)-z)^{-1}dz.
\end{align}
Moreover, for coefficients $g_{r}(k,\vec t), G_{r}(k,\vec t)$, the following estimates hold:
\begin{align}\label{ii-a}
&|g_{r}(k,\vec t)|<k^{2l-n-\delta}k^{-\gamma_{2}r},\end{align}
\begin{align}\label{ii}
\|G_{r}(k,\vec t)\|_{S_1}\leq \hat{v}k^{-\gamma_{2}r}, ~~\hat{v}=cR_{0}^{n}\max\limits_{m\in \mathbb{Z}^{n}}|v_{m}|.
\end{align}
\end{theorem}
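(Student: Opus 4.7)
The plan is to prove this via a contour-integral perturbation argument on the circle $C_0 = \{|z - k^{2l}| = k^{2l-n-\delta}\}$. By Lemma \ref{lemma t 4l>n+1}, estimate \eqref{in-a} guarantees that $C_0$ lies at distance at least $k^{2l-n-\delta}$ from the spectrum of $H_0(\vec t)$, and that exactly one eigenvalue $p_j^{2l}(\vec t)$ of $H_0(\vec t)$ is enclosed. The goal is to expand $(H(\vec t)-z)^{-1}$ around $(H_0(\vec t)-z)^{-1}$, show that the expansion converges in trace norm on $C_0$ for large $k$, and then extract the single eigenvalue $\lambda(\vec t)$ and its projection by the Riesz functional calculus.

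Derivation of \eqref{series E 4l>n+1}-\eqref{series E formula 4l>n+1} is direct: from the geometric identity
\begin{equation*}
(H-z)^{-1} = (H_0-z)^{-1} + \sum_{r=1}^{\infty} (-1)^r \bigl((H_0-z)^{-1}V\bigr)^r (H_0-z)^{-1},
\end{equation*}
integration against $-\frac{1}{2\pi i}\oint_{C_0} \cdot\, dz$ produces $E_j$ plus the $G_r$'s. For \eqref{series pj 4l>n+1}-\eqref{series pj formula 4l>n+1}, I would use the $\log\det$ identity
\begin{equation*}
\tfrac{d}{dz}\ln\det\bigl(I + V(H_0-z)^{-1}\bigr) = \Tr\bigl[(H_0-z)^{-1} - (H-z)^{-1}\bigr],
\end{equation*}
combined with $\lambda(\vec t) - p_j^{2l}(\vec t) = -\frac{1}{2\pi i}\Tr \oint_{C_0} z[(H-z)^{-1} - (H_0-z)^{-1}]\,dz$. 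Integrating by parts in $z$ and expanding $\ln(I+A) = \sum_{r\ge 1}(-1)^{r+1}A^r/r$ with $A=V(H_0-z)^{-1}$ yields the claimed series with coefficients $g_r(k,\vec t)$ (using the cyclicity of trace to reshuffle into the form \eqref{series pj formula 4l>n+1}).

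The core estimates \eqref{ii-a}, \eqref{ii} are the substantive content and require working in the plane-wave basis $\{\psi_i\}$. In that basis, $V\psi_i = \sum_{|q|\le R_0} v_q \psi_{i+q}$ and $(H_0-z)^{-1}\psi_i = (p_i^{2l}(\vec t) - z)^{-1}\psi_i$, so the trace in \eqref{series pj formula 4l>n+1} expands as
\begin{equation*}
\sum_{i_0\in\Z^n}\ \sum_{\substack{q_1,\ldots,q_r\\ |q_s|\le R_0,\ \sum q_s=0}} v_{q_1}\cdots v_{q_r}\prod_{s=1}^{r}\bigl(p_{i_s}^{2l}(\vec t) - z\bigr)^{-1},\qquad i_s := i_0 + q_1+\cdots + q_s.
\end{equation*}
Since $R_0 < k^{\beta}$, each pair $(i_s,i_{s+1})$ satisfies $|q_s|<k^{\beta}$, so estimate \eqref{main} applies: consecutive resolvent denominators multiply to exceed $k^{2\gamma_2}/200$. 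Pairing the $r$ factors (treating an odd leftover via \eqref{in-a}) yields a factor $k^{-\gamma_2 r}$ per term, while the bound $|v_q|\le \max_m|v_m|$ and the combinatorial count $\le (cR_0^n)^r$ on the $q$-sums handle the remaining multiplicity. The outer sum over $i_0$ is finite once one accounts that only $i_0$ for which some $p_{i_s}^{2l}(\vec t)$ lies near $k^{2l}$ contribute at the leading order, producing the overall prefactor $k^{2l-n-\delta}$ in \eqref{ii-a} coming from the ``diagonal'' resolvent factor on the contour.

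The main obstacle is precisely this pairing/combinatorial argument: individually, a resolvent $(p_i^{2l}(\vec t)-z)^{-1}$ can be as large as $k^{-(2l-n-\delta)}$, which by itself would not give convergent series if $2l-n<0$; convergence relies entirely on the fact that the \emph{product} of two consecutive denominators is large, which is exactly what estimate \eqref{main} and the choice \eqref{gamma 2 4l bigger n+1} of $\gamma_2>0$ deliver. Once \eqref{ii} is in hand, the $S_1$-norm bound forces absolute convergence of $\sum_r G_r$ and hence $E(\vec t)$ is a genuine rank-one projection for large $k$; uniqueness of $\lambda(\vec t)$ in $\varepsilon(k,\delta)$ then follows by a Rouché-type argument on $\det(I+tV(H_0-z)^{-1})$ for $t\in[0,1]$, noting $H_0(\vec t)$ has exactly one eigenvalue in $\varepsilon(k,\delta)$.
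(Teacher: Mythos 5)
Your overall framework --- Riesz projection over $C_0$, Neumann series for the resolvent, expansion in the plane-wave basis, and convergence driven by the pairing estimate \eqref{main} rather than individual denominators --- is exactly the strategy of the cited reference \cite{K97}, and you correctly identify that the whole point is that a single denominator $|p_i^{2l}(\vec t)-z|^{-1}$ is not small. However, the specific pairing you propose fails for odd $r$. Grouping consecutive resolvent factors $(i_0,i_1),(i_2,i_3),\dots$ leaves one unpaired factor when $r$ is odd; that factor is controlled only by \eqref{in-a}, giving $k^{-(2l-n-\delta)}$, and by \eqref{gamma 2 4l bigger n+1} one has $\gamma_2-(2l-n-\delta)=\tfrac12(n-1)(1-\beta)>0$, so $k^{-(2l-n-\delta)}$ is \emph{strictly larger} than $k^{-\gamma_2}$. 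Your bound for odd $r$ therefore comes out as $k^{-\gamma_2(r-1)}$, which exceeds the claimed $k^{2l-n-\delta}k^{-\gamma_2 r}$ by the unbounded factor $k^{\gamma_2-(2l-n-\delta)}$.

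The device that removes this parity problem --- and the one the paper itself uses later (see \eqref{M17b*}, \eqref{M17c*}, \eqref{proj op c 2l>n*} and the discussion of $B_V(z)$ in the text) --- is symmetrization: set $B_0(z)=(H_0(\vec t)-z)^{-1/2}V(H_0(\vec t)-z)^{-1/2}$, so that each eigenvalue denominator is split into two square roots, one attaching to the $v_q$ on each side. Then \eqref{main} directly gives $|(B_0(z))_{i,i+q}|\lesssim |v_q|k^{-\gamma_2}$, hence $\|B_0(z)\|_1\lesssim\|V\|_*k^{-\gamma_2}$ uniformly on $C_0$, and the Neumann series converges with the advertised rate for every $r$, odd or even. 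Since $\Tr\bigl((H_0-z)^{-1}V\bigr)^r=\Tr B_0(z)^r$ by cyclicity and $G_r=\frac{(-1)^{r+1}}{2\pi i}\oint_{C_0}(H_0-z)^{-1/2}B_0(z)^r(H_0-z)^{-1/2}\,dz$, estimates \eqref{ii-a} and \eqref{ii} follow cleanly (the outer half-resolvents in $G_r$ and the contour length of $C_0$ producing the surviving prefactors). Equivalently, you can repair your combinatorial argument by writing each $|p_{i_s}^{2l}(\vec t)-z|^{-1}=|p_{i_s}^{2l}(\vec t)-z|^{-1/2}\cdot|p_{i_s}^{2l}(\vec t)-z|^{-1/2}$ and pairing each half with its neighbor around the cyclic product of indices; this is precisely what the symmetrization accomplishes algebraically and produces $r$ full pairs with no leftover for any parity of $r$.
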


\begin{remark} \label{r} We use the following norm $\|T\|_1$ of an operator $T$ in $l_2(\Z^2)$:
$$\|T\|_1=\max _{i}\sum _p|T_{pi}|.$$ It can be easily seen from construction in \cite{K97} that estimates  \eqref{ii} hold with respect to this norm too. 
\end{remark}
Let us introduce the notations:
\begin{equation}
T(m)\equiv \frac{\partial ^{\mid m\mid }}{\partial t_1^{m_1}
\partial t_2^{m_2}...\partial t_n^{m_n}}, \label{1.1.20a}
\end{equation}
$$\mid m\mid \equiv m_1+m_2+...+m_n,\ m!\equiv m_1!m_2!...m_n!,$$
$$0\leq \mid m\mid <\infty,\ T(0)f\equiv f.$$
The following theorem and corollary are proven in \cite{K97}.
\begin{theorem} \label{2.3} Under the conditions of Theorem 2.2, the series (\ref{series pj 4l>n+1}), (\ref{series E 4l>n+1})
can be differentiated with respect to $\vec t$ any number of times, and 
they retain their asymptotic character. Coefficients $g_r(k,\vec t)$ and 
$G_r(k,\vec t)$ satisfy the following estimates in the $(k^{-n+1-2\delta })$-neighborhood in $\C^n$ of the nonsingular set $\chi _1(k,\beta,\delta )$:
\begin{equation}
\mid T(m)g_r(k,\vec t)\mid <m!k^{2l-n-\delta}(\hat{v}k^{-\gamma_{2}})^{r}k^{\mid m\mid (n-1+2\delta )} ,\label{2.2.32a}
\end{equation}
\begin{equation}
\| T(m)G_r(k,\vec t)\|_{1}<m!(\hat{v}k^{-\gamma_{2}})^{r}k^{\mid m\mid (n-1+2\delta )}. \label{2.2.33a}
\end{equation}
\end{theorem}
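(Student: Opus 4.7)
The plan is to reduce differentiability of the series to analyticity in a complex polydisk neighborhood plus the Cauchy integral formula. Since the coefficients $g_r$ and $G_r$ are contour integrals in $z$ of polynomial expressions in the resolvent $(H_0(\vec t)-z)^{-1}$, and the resolvent depends analytically on $\vec t$ as long as $z \notin \spec H_0(\vec t)$, the natural route is first to extend the convergence bounds \eqref{ii-a}--\eqref{ii} to complex $\vec t$ in a polydisk of polyradius $k^{-n+1-2\delta}$, and then read off derivative bounds by standard polydisk Cauchy estimates.

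First, I would show that the non-resonance inequalities \eqref{in-a} and \eqref{main} persist, with slightly worse constants, on the polydisk $D_{\vec t_0} = \{\vec t \in \C^n : |t_s - t_{0,s}| < k^{-n+1-2\delta},\ s=1,\dots,n\}$ centered at any $\vec t_0 \in \chi_1(k,\beta,\delta)$. Here $p_i^{2l}(\vec t)$ is interpreted as the entire polynomial $\left(\sum_{s=1}^{n}(t_s+2\pi i_s)^2\right)^l$. For indices $i$ relevant to $C_0$, one has $|p_i(\vec t_0)|\sim k$, so $|\nabla p_i^{2l}| = O(k^{2l-1})$ and the variation of $p_i^{2l}$ across $D_{\vec t_0}$ is at most $O(k^{2l-1}\cdot k^{-n+1-2\delta}) = O(k^{2l-n-2\delta})$, which is smaller by a factor $k^{-\delta}$ than the gap $k^{2l-n-\delta}$ in \eqref{in-a}; hence \eqref{in-a} survives. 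An analogous case analysis, splitting the $i,i+q$ pairs in \eqref{main} according to whether both factors are near $k^{2l}$ or one of them is far away, shows that the product lower bound $k^{2\gamma_2}$ persists with a constant $100$ in place of $200$. Remote indices contribute trivially thanks to the growth of $|p_i^{2l}-z|$.

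Once the non-resonance extends, the arguments of \cite{K97} that produce \eqref{ii-a} and \eqref{ii} are performed verbatim for complex $\vec t\in D_{\vec t_0}$: the same geometric series/combinatorial expansion of $\left((H_0(\vec t)-z)^{-1}V\right)^r$ in the Fourier basis, followed by the contour integration on $C_0$, gives
\begin{equation*}
|g_r(k,\vec t)|<k^{2l-n-\delta}(\hat v k^{-\gamma_{2}})^{r},\qquad \|G_r(k,\vec t)\|_{1}\leq (\hat v k^{-\gamma_{2}})^{r},
\end{equation*}
uniformly on $D_{\vec t_0}$; the explicit $\hat v^r$ appears because on a complex polydisk one no longer benefits from self-adjointness and must bound each of the $r$ factors of $V$ separately by $\hat v$.

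Finally, I would invoke the polydisk Cauchy formula: for any function $f$ holomorphic on $D_{\vec t_0}$ with $\sup_{D_{\vec t_0}}|f|\leq M$,
\begin{equation*}
|T(m)f(\vec t_0)|\leq m!\,k^{|m|(n-1+2\delta)}\,M,
\end{equation*}
obtained by iterated contour integrals of radius $k^{-n+1-2\delta}$ in each variable $t_s$. Applied to the scalar $g_r$ and, componentwise in the Fourier basis, to the matrix elements of $G_r$ (which is exactly how $\|\cdot\|_1$ of Remark \ref{r} is defined), this gives \eqref{2.2.32a} and \eqref{2.2.33a} directly. The derivative series converges uniformly on each polydisk since the factor $(\hat v k^{-\gamma_2})^r$ decays geometrically in $r$, which justifies term-by-term differentiation of \eqref{series pj 4l>n+1} and \eqref{series E 4l>n+1}. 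The principal obstacle is the first step, namely the stable propagation of \eqref{in-a} and \eqref{main} into the complex polydisk, uniformly in $i$ and in $q$ with $|q|<k^\beta$; once that is done the Cauchy bookkeeping is mechanical.
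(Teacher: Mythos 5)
The paper cites Theorem \ref{2.3} directly from \cite{K97} rather than proving it; your plan --- propagate the non-resonance and resolvent bounds to a complex polydisk of polyradius comparable to $k^{-n+1-2\delta}$ and then read off derivative estimates via the iterated Cauchy integral formula --- is exactly the argument used there. One bookkeeping caveat: since the stated estimates must hold for every $\vec t$ in the complex $(k^{-n+1-2\delta})$-neighborhood of $\chi_1$, not only at the centers $\vec t_0 \in \chi_1$, the supremum bounds for $g_r$ and $G_r$ need to be established first on a polydisk of roughly twice that radius before the Cauchy formula is applied around an arbitrary point of the neighborhood; your gradient estimate already accommodates this after an innocuous adjustment of the constant.
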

\begin{corollary} \label{derivatives} There are the estimates for the perturbed eigenvalue and its
spectral projection:
\begin{equation}
\mid T(m)(\lambda (\vec t)-p_j^{2l}(\vec t))\mid <cm!k^{(n-1+2\delta)|m|}k^{2l-n-\delta-2\gamma_2}, \label{2.2.34b}
\end{equation}
\begin{equation} 
\| T(m)(E(\vec t)-E_j)\|_{1} <cm!k^{(n-1+2\delta)|m|}k^{-\gamma_2}. \label{2.2.35}
\end{equation}
\end{corollary}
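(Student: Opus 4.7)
The plan is to obtain the two estimates by applying $T(m)$ term-by-term to the series \eqref{series pj 4l>n+1} for $\lambda(\vec t)$ and \eqref{series E 4l>n+1} for $E(\vec t)$, and then summing the geometric bounds from Theorem \ref{2.3}.

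First I would verify that term-by-term differentiation is justified. By Theorem \ref{2.3}, for each fixed $m$, $\vec t$ in the $(k^{-n+1-2\delta})$-neighborhood of $\chi_1(k,\beta,\delta)$, and $k\geq k_0(\beta,\delta,m)$ sufficiently large so that $\hat v k^{-\gamma_2}<1/2$, the series $\sum_r T(m)g_r(k,\vec t)$ and $\sum_r T(m)G_r(k,\vec t)$ are dominated respectively by
\begin{equation*}
m!\,k^{2l-n-\delta}k^{|m|(n-1+2\delta)}\sum_{r=2}^\infty(\hat v k^{-\gamma_2})^r,\qquad m!\,k^{|m|(n-1+2\delta)}\sum_{r=1}^\infty(\hat v k^{-\gamma_2})^r,
\end{equation*}
so the differentiated series converge absolutely and uniformly. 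Hence $T(m)$ commutes with the summation, yielding $T(m)(\lambda(\vec t)-p_j^{2l}(\vec t))=\sum_{r\geq 2}T(m)g_r(k,\vec t)$ and likewise for $E(\vec t)-E_j$.

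Next I would sum the geometric series. Since $\hat v k^{-\gamma_2}<1/2$ for $k$ large, one has $\sum_{r\geq 2}(\hat v k^{-\gamma_2})^r\leq 2(\hat v k^{-\gamma_2})^2\leq c\,k^{-2\gamma_2}$ and $\sum_{r\geq 1}(\hat v k^{-\gamma_2})^r\leq 2\hat v k^{-\gamma_2}\leq c\,k^{-\gamma_2}$. Substituting into the two dominating expressions above produces
\begin{equation*}
|T(m)(\lambda(\vec t)-p_j^{2l}(\vec t))|<cm!\,k^{(n-1+2\delta)|m|}k^{2l-n-\delta-2\gamma_2},
\end{equation*}
\begin{equation*}
\|T(m)(E(\vec t)-E_j)\|_1<cm!\,k^{(n-1+2\delta)|m|}k^{-\gamma_2},
\end{equation*}
exactly as claimed in \eqref{2.2.34b} and \eqref{2.2.35}.

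There is really no significant obstacle here: the work was already carried out in Theorem \ref{2.3}, which provides the per-term derivative bounds \eqref{2.2.32a}--\eqref{2.2.33a}; the present corollary is essentially a bookkeeping step that collects the tail of the geometric series beginning at $r=2$ (respectively $r=1$). The only point requiring minor care is ensuring that $k_0(\beta,\delta)$ can be chosen large enough, uniformly in $\vec t$ over the $(k^{-n+1-2\delta})$-neighborhood of $\chi_1(k,\beta,\delta)$, to guarantee $\hat v k^{-\gamma_2}<1/2$; this is immediate since $\gamma_2>0$ by \eqref{gamma 2 4l bigger n+1} and $\hat v$ is independent of $\vec t$.
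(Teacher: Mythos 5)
Your argument is correct and is exactly the intended one: the paper states that Theorem \ref{2.3} and Corollary \ref{derivatives} are proven in \cite{K97}, and the corollary indeed follows from \eqref{2.2.32a}--\eqref{2.2.33a} by summing the geometric series from $r=2$ (eigenvalue) and $r=1$ (projection) once $\hat v k^{-\gamma_2}<1/2$. One tiny slip: the threshold on $k$ does not depend on $m$, since the ratio of the geometric series is $\hat v k^{-\gamma_2}$ independently of $m$ (the factors $m!$ and $k^{|m|(n-1+2\delta)}$ sit outside the sum), a point your own closing remark effectively acknowledges.
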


\begin{corollary} \label{def1}There is a one-dimensional space of Bloch eigenfunctions $u_0$ corresponding to the projection $E(t)$ given by  \eqref{series E 4l>n+1}.
They are given by the formula:
\begin{align}\label{def of u 2l>n}
 u_0(\vec{x})&=AE(\vec t)e^{i\langle{ \vec{p}_{j}(\vec t), \vec{x} }\rangle}=A\sum_{m \in \mathbb{Z}^{n}}E(\vec t)_{mj}e^{i\langle{ \vec{p}_{m}(\vec t), \vec{x} }\rangle}\\
\nonumber &=Ae^{i\langle{ \vec{p}_{j}(\vec t), \vec{x} }\rangle}\Bigg(1+\sum_{q\neq 0}\frac{v_{q}}{p_{j}^{2l}(\vec t)-p_{j+q}^{2l}(\vec t)}e^{i\langle{\vec{p}_{q}(\vec{0}), \vec{x} }\rangle}+\cdot\cdot\cdot\Bigg),~~j,q \in \mathbb{Z}^{n},\ A\in \C.
\end{align}
\end{corollary}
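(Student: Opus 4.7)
The strategy is to deduce Corollary \ref{def1} directly from Theorem \ref{thm1 4l>n+1} in two moves: first show that the spectral projection $E(\vec t)$ has rank one, and then write down a distinguished generator of its range by applying $E(\vec t)$ to the unperturbed plane wave $e^{i\la \vec p_j(\vec t),\vec x\ra}$.

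\emph{Rank.} From the bound \eqref{ii} and the series \eqref{series E 4l>n+1},
$$\|E(\vec t)-E_j\|_{S_1}\;\le\;\sum_{r=1}^{\infty}\hat v\,k^{-\gamma_2 r}\;=\;O(k^{-\gamma_2}),$$
which is strictly less than $1$ for $k$ large, since $\gamma_2>0$. Two orthogonal projections whose operator-norm distance is smaller than $1$ are unitarily equivalent (a standard Riesz--Sz.-Nagy--Kato fact), so $\rank E(\vec t)=\rank E_j=1$. Define $u_0(\vec x):=AE(\vec t)e^{i\la \vec p_j(\vec t),\vec x\ra}$; since $E_j e^{i\la \vec p_j,\vec x\ra}=e^{i\la \vec p_j,\vec x\ra}$, the same estimate gives $\|u_0-Ae^{i\la \vec p_j,\vec x\ra}\|_{L^2(Q)}=O(k^{-\gamma_2})|A|$, so $u_0\neq 0$ whenever $A\neq 0$. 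As $E(\vec t)$ is the spectral projection onto the (now known to be one-dimensional) eigenspace of $H(\vec t)$ at $\lambda(\vec t)$, the family $\{u_0:A\in\C\}$ is exactly this eigenspace.

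\emph{Explicit series.} The middle line of \eqref{def of u 2l>n} is merely the matrix representation of $E(\vec t)$ in the orthonormal basis $\{e^{i\la \vec p_m(\vec t),\vec x\ra}\}_{m\in\Z^n}$. For the last line I substitute $E(\vec t)=E_j+\sum_{r\ge 1}G_r$ into $u_0$ and evaluate term by term. The zeroth-order piece yields $Ae^{i\la \vec p_j,\vec x\ra}$. For $r=1$, using $(H_0(\vec t)-z)^{-1}e^{i\la \vec p_i,\vec x\ra}=(p_i^{2l}(\vec t)-z)^{-1}e^{i\la \vec p_i,\vec x\ra}$ together with the Fourier expansion of $V$, the integrand in \eqref{series E formula 4l>n+1} acting on $e^{i\la \vec p_j,\vec x\ra}$ becomes
$$\sum_{q}\frac{v_q\,e^{i\la \vec p_{j+q}(\vec t),\vec x\ra}}{(p_j^{2l}(\vec t)-z)\,(p_{j+q}^{2l}(\vec t)-z)}.$$
By \eqref{in-a} the unique pole of this expression inside $C_0$ is at $z=p_j^{2l}(\vec t)$, and the residue at that point produces the coefficient $(p_j^{2l}(\vec t)-p_{j+q}^{2l}(\vec t))^{-1}$ displayed in \eqref{def of u 2l>n}. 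The higher-order $G_r$ give the ``$\cdots$''; their contributions converge absolutely and uniformly in $L^\infty(Q)$ by \eqref{ii} and Remark \ref{r}.

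There is no serious obstacle here: the content of the corollary is essentially a bookkeeping consequence of Theorem \ref{thm1 4l>n+1}. The one step that genuinely requires justification is the rank argument, but the quantitative trace-norm estimate \eqref{ii} reduces it to the standard perturbation statement for projections.
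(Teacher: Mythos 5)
Your proof is correct and follows the natural line of reasoning that the paper (and the source \cite{K97}) presupposes: the paper does not spell out a proof of this corollary, treating it as a direct consequence of Theorem \ref{thm1 4l>n+1}, and your two steps --- establishing $\rank E(\vec t)=1$ from $\|E(\vec t)-E_j\|_{S_1}=O(k^{-\gamma_2})<1$ via the standard projection-perturbation fact, and reading off the explicit expansion by evaluating the $r=1$ contour integral as a residue at $z=p_j^{2l}(\vec t)$ --- are precisely how one unwinds that consequence. (A slightly shorter alternative for the rank step: since $\Tr$ of a projection equals its rank, $|\rank E(\vec t)-\rank E_j|\le\|E(\vec t)-E_j\|_{S_1}<1$ forces equality of these integers.)
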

Let $\tilde \chi _1(k,\beta,\delta)\subset S(k)$ be the image of $\chi _1(k,\beta,\delta)\subset S_0(k)$ on the sphere $S(k)$:
\begin{equation}
\tilde \chi _1(k,\beta,\delta)=\{\vec p_j(\vec t)\in S(k):\  \vec t \in  \chi _1(k,\beta,\delta)\}.\end{equation}
Note that $\tilde \chi _1(k,\beta,\delta)$ is well-defined, since $ \chi _1(k,\beta,\delta)$ does not contain self intersections of $S_0(k)$.
Let $\mathcal B(\lambda )\subset S_{n-1}$ be the set of directions corresponding to the nonsingular set $\tilde \chi _1(k,\beta,\delta)$: \begin{equation} \label{formulaB}
\mathcal B(\lambda)=\big\{\vec \nu \in S_{n-1}: k\vec \nu \in \tilde \chi _1(k,\beta,\delta)\big\}, \ k^{2l}=\lambda. \end{equation}
The set $\mathcal B(\lambda)$ can be interpreted as a set of possible directions of propagation for almost plane waves \eqref{def of u 2l>n}.
We define the non-resonance set $\mathcal G\subset \R^n$ as the union of all $ \tilde \chi _1(k,\beta,\delta)$:
\begin{equation} \label{G}
\mathcal G=\bigcup\limits _{k>k_0(\beta, \delta )}\tilde  \chi _1(k,\beta,\delta)
\end{equation}
Further we denote vectors of $\mathcal G$ by $\vec k$. Formulas \eqref{formulaB}, \eqref{G} yield:
\begin{equation} \label{G+}
\mathcal G=\big\{\vec k=k\vec \nu: \vec \nu \in \mathcal B(k^{2l}), \ k>k_0(\beta,\delta )\big\}. \end{equation}
Since any vector $\vec k$  can be written as $\vec k=\vec p_j(t)$ in a unique way, formula \eqref{G} yields:
\begin{equation} \label{G*}
\mathcal G=\big\{\vec p_j(\vec t):\ \vec t \in  \chi _1(k,\beta,\delta),\ \mbox{where } k=p_j(\vec t),\ k>k_0(\beta,\delta )\big\}. \end{equation}
Let $\lambda (\vec k)$ be defined by \eqref{series pj 4l>n+1}, where $\vec k=\vec p_j(\vec t)$.

Next, we describe isoenergetic surfaces for  the operator \eqref{linear eq 4l>n+1 in R}. The set $\mathcal{D}_{}(\lambda)$,
defined as a level (isoenergetic) set for $\lambda _{}(\vec
k)$, \begin{equation} {\mathcal D} _{}(\lambda )=\left\{ \vec k \in \mathcal{G} _{} :\lambda _{}(\vec k)=\lambda \right\}.\label{isoset-lin} \end{equation}
\begin{lemma}\label{L:2.12a} For any sufficiently large $\lambda $, $\lambda >k_0(\beta,\delta )^{2l}$, and for every
$\vec{\nu}\in\mathcal{B} (\lambda)$, there is a
unique $\varkappa =\varkappa (\lambda ,\vec{\nu})$ in the
interval $$
I:=[k-k^{-n+1-2\delta},k+k^{-n+1-2\delta}],\quad
k^{2l}=\lambda , $$ such that \begin{equation}\label{2.70}
\lambda (\varkappa \vec{\nu})=\lambda . \end{equation}
Furthermore, $|\varkappa - k| \leq ck^{2l-n-\delta-2\gamma_2-2l+1}=ck^{-\gamma_1}$, $\gamma_1=4l-2-\beta(n-1)-\delta>0$.
\end{lemma}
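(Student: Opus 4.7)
My plan is to recast the scalar equation $\lambda(\varkappa\vec\nu) = \lambda$ on the interval $I$ and apply the intermediate value theorem (equivalently, an implicit function argument in one variable). Since $\vec k_0 = k\vec\nu \in \tilde\chi_1(k,\beta,\delta)$, there is a unique $j\in\Z^n$ with $\vec k_0 = \vec p_j(\vec t_0)$ and $\vec t_0 \in \chi_1(k,\beta,\delta)$. For $\varkappa \in I$, set $\vec t(\varkappa) := \varkappa\vec\nu - 2\pi j$; by construction $\vec t(\varkappa)$ lies in the $(k^{-n+1-2\delta})$-neighborhood of $\chi_1(k,\beta,\delta)$, which is precisely the domain on which Theorem \ref{thm1 4l>n+1} and Theorem \ref{2.3} deliver convergent series. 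Using $p_j^{2l}(\vec t(\varkappa)) = \varkappa^{2l}$, I would introduce
\[
F(\varkappa) := \lambda(\varkappa\vec\nu) - \lambda = (\varkappa^{2l} - k^{2l}) + \Phi(\varkappa),\qquad \Phi(\varkappa) := \sum_{r=2}^\infty g_r(k,\vec t(\varkappa)).
\]

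The two quantitative inputs I need are bounds on $\Phi$ and $\Phi'$. Summing \eqref{ii-a} as a geometric series immediately gives $|\Phi(\varkappa)| \leq c\, k^{2l-n-\delta-2\gamma_2}$ uniformly on $I$. For the derivative, I would invoke Corollary \ref{derivatives} with $|m|=1$, obtaining
\[
|\Phi'(\varkappa)| \leq c\, k^{(n-1+2\delta)+(2l-n-\delta-2\gamma_2)} = c\, k^{2l-1+\delta-2\gamma_2}.
\]
Since $\gamma_2>0$ is fixed and $\delta$ small, the exponent $\delta - 2\gamma_2$ is strictly negative, so $|\Phi'|$ is negligible compared to the derivative $2l\varkappa^{2l-1}\sim 2l k^{2l-1}$ of the leading term. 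Consequently $F'(\varkappa) \geq l\, k^{2l-1}$ uniformly on $I$ for large $k$.

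The conclusion then follows from a standard monotonicity argument. $F$ is strictly increasing and continuous on $I$, and $|F(k)| = |\Phi(k)| \leq c\, k^{2l-n-\delta-2\gamma_2}$, so by the intermediate value theorem there exists a unique $\varkappa \in I$ with $F(\varkappa)=0$; moreover
\[
|\varkappa - k| \leq \frac{|F(k)|}{\inf_I F'} \leq c\, \frac{k^{2l-n-\delta-2\gamma_2}}{k^{2l-1}} = c\, k^{-\gamma_1},
\]
which yields simultaneously the existence, the uniqueness, and the deviation estimate.

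The one point I would watch carefully, and which I regard as the main obstacle, is verifying that the upper bound $c\, k^{-\gamma_1}$ actually falls below the half-width $k^{-n+1-2\delta}$ of $I$, so that the zero really lies in the interior of $I$ rather than at its boundary. This reduces to $\gamma_1 \geq n-1+2\delta$, i.e.\ $4l-2-\beta(n-1)-\delta \geq n-1+2\delta$, which is a direct consequence of the hypothesis $4l>n+1$ once $\beta$ and $\delta$ are taken sufficiently small — a choice fully consistent with the conditions under which $\chi_1(k,\beta,\delta)$ is defined in Lemma \ref{lemma t 4l>n+1}.
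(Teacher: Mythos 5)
Your proof is correct and follows exactly the approach the paper intends: the paper's one-line proof, ``The Lemma easily follows from \eqref{2.2.34b} for $|m|=1$,'' is precisely the argument you spell out --- use the $|m|=0$ bound from Corollary~\ref{derivatives} to control $\Phi$, the $|m|=1$ bound to show $F'$ is dominated by $2l\varkappa^{2l-1}$, and then conclude by monotonicity/IVT. Your final remark about needing $\gamma_1 \geq n-1+2\delta$ (equivalently $2\gamma_2 > \delta$, i.e.\ $4l-n-1 > \beta(n-1)+3\delta$) is a genuine and correctly handled point; it holds for $\beta,\delta$ small, which is consistent with the paper's implicit choice of parameters.
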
 
The Lemma easily follows from \eqref{2.2.34b} for $|m|=1$. 


\begin{lemma} \label{L:2.13a} \begin{enumerate} \item For any sufficiently
large $\lambda $, $\lambda >k_0(\beta,\delta )^{2l}$, the set $\mathcal{D}(\lambda )$, defined by \eqref{isoset-lin} is a distorted
sphere with holes; it is described by the formula:
\begin{equation} 
{\mathcal D}_{}(\lambda)=\{\vec k:\vec
k=\varkappa _{}(\lambda,\vec{\nu})\vec{\nu},
\ \vec{\nu} \in {\mathcal B}_{}(\lambda)\},\label{May20} 
\end{equation} where $\varkappa (\lambda,\vec \nu)=k+h (\lambda, \vec \nu)$ and $h (\lambda, \vec \nu)$ obeys the
inequalities:
\begin{equation}\label{2.75a}
|h|<ck^{-\gamma_1},\quad
\left|\nabla _{\vec \nu }h\right| <
ck^{-\gamma_1+n-1+2\delta}=ck^{-2\gamma_2+\delta}.
\end{equation}

\item The measure of $\mathcal{B}(\lambda)\subset S_{n-1}$ satisfies the
estimate \eqref{B}.


\item The surface $\mathcal{D}(\lambda)$ has the measure that is
asymptotically close to that of the whole sphere of the radius $k$ in the sense that
\begin{equation}\label{2.77}
\bigl |\mathcal{D}(\lambda)\bigr|\underset{\lambda \rightarrow
\infty}{=}\omega _{n-1}k^{n-1}\bigl(1+O(k^{-\delta})\bigr),\quad \lambda =k^{2l}.
\end{equation}
\end{enumerate}
\end{lemma}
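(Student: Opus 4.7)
The plan is to establish the three items in turn, each as a direct consequence of the linear-operator estimates already collected in Theorem \ref{thm1 4l>n+1}, Corollary \ref{derivatives}, and Lemma \ref{L:2.12a}, so essentially no new analytic input is required.

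For item (1), the radial parametrization \eqref{May20} is already supplied by Lemma \ref{L:2.12a}; I would set $h(\lambda,\vec\nu):=\varkappa(\lambda,\vec\nu)-k$, so that the pointwise bound $|h|<ck^{-\gamma_1}$ in \eqref{2.75a} is immediate. For the tangential-gradient bound on $h$, the plan is to implicitly differentiate the defining relation $\lambda(\varkappa\vec\nu)=\lambda$ with respect to a tangential parameter $\theta$ on $S_{n-1}$. Writing $\mu(\vec k):=\lambda(\vec k)-|\vec k|^{2l}$ for the perturbative correction and using that $\nabla_{\vec k}(|\vec k|^{2l})$ is radial (so $\nabla_{\vec k}(|\vec k|^{2l})\cdot\partial_\theta\vec\nu=0$), one obtains
\begin{equation*}
\partial_\theta h \;=\; -\,\frac{\varkappa\,\nabla_{\vec k}\mu(\varkappa\vec\nu)\cdot\partial_\theta\vec\nu}{2l\varkappa^{2l-1}+\nabla_{\vec k}\mu(\varkappa\vec\nu)\cdot\vec\nu}.
\end{equation*}
Corollary \ref{derivatives} with $|m|=1$ (recalling $\partial_{t_i}=\partial_{k_i}$) controls $|\nabla_{\vec k}\mu|$; since the denominator is dominated by $2l\varkappa^{2l-1}\sim k^{2l-1}$ once $k$ is large, tracking the exponents yields the claimed bound $-\gamma_1+n-1+2\delta=-2\gamma_2+\delta$.

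Item (2) is measure-theoretic bookkeeping. By \eqref{formulaB}, $\mathcal{B}(\lambda)$ is the radial projection of $\tilde\chi_1(k,\beta,\delta)\subset S(k)$ onto the unit sphere. The packing map that folds pieces of $S(k)$ into $K$ is a piecewise isometry away from self-intersections of $S_0(k)$, and these self-intersections are excluded from $\chi_1$, so $|\tilde\chi_1|=|\chi_1|$. Radial projection onto $S_{n-1}$ then rescales $(n-1)$-surface measure on $S(k)$ by $k^{-(n-1)}$; combined with the measure estimate in Lemma \ref{lemma t 4l>n+1} and $|S_0(k)|=|S(k)|=\omega_{n-1}k^{n-1}$, this gives $|\mathcal{B}(\lambda)|=\omega_{n-1}(1+O(k^{-\delta/8}))$, i.e.\ \eqref{B}.

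For item (3), I would parametrize $\mathcal{D}(\lambda)$ over $\mathcal{B}(\lambda)$ via \eqref{May20}. The surface element of a radial graph $\vec k=\varkappa(\vec\nu)\vec\nu$ on the unit sphere is $\varkappa^{n-1}\sqrt{1+|\nabla_{S_{n-1}}\varkappa|^2/\varkappa^2}\,d\sigma(\vec\nu)$; by item (1) the square-root factor is $1+o(1)$ and $\varkappa^{n-1}=k^{n-1}(1+O(k^{-\gamma_1-1}))$, so integrating against the measure estimate of item (2) yields \eqref{2.77}. The main obstacle is clearly item (1): one must check that the elementary implicit-differentiation identity above, combined with the derivative estimates of Corollary \ref{derivatives}, really produces the claimed exponent rather than one off by a power of $k$; items (2) and (3) are then routine geometric/measure-theoretic consequences, provided one is careful that the piecewise-isometry between $S(k)$ and $S_0(k)$ allows the measure estimate for $\chi_1$ to be transferred intact to $\mathcal{B}(\lambda)$.
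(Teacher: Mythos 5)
Your overall strategy coincides with the paper's one-line proof (``based on Implicit Function Theorem''), and your implicit-differentiation identity
\begin{equation*}
\partial_\theta h \;=\; -\,\frac{\varkappa\,\nabla_{\vec k}\mu(\varkappa\vec\nu)\cdot\partial_\theta\vec\nu}{2l\varkappa^{2l-1}+\nabla_{\vec k}\mu(\varkappa\vec\nu)\cdot\vec\nu}
\end{equation*}
is correct. But the concern you flag at the end is a real one, and you leave it unresolved. If $\theta$ is arc length on the \emph{unit} sphere $S_{n-1}$ (the literal reading of $\nabla_{\vec\nu}$), then Corollary \ref{derivatives} with $|m|=1$ gives $|\nabla_{\vec k}\mu|\lesssim k^{n-1+2\delta}\cdot k^{2l-n-\delta-2\gamma_2}=k^{2l-1+\delta-2\gamma_2}$, so your formula yields
\begin{equation*}
|\partial_\theta h|\;\lesssim\;\frac{k\cdot k^{2l-1+\delta-2\gamma_2}}{k^{2l-1}}\;=\;k^{1+\delta-2\gamma_2},
\end{equation*}
a full power of $k$ worse than the stated $k^{-2\gamma_2+\delta}$. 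This is not cosmetic: in the borderline case $l=1$, $n=2$ one has $2\gamma_2=1-\beta-2\delta<1$, so $k^{1+\delta-2\gamma_2}$ actually grows, whereas the stated bound $k^{\delta-2\gamma_2}$ decays. The resolution is that $\nabla_{\vec\nu}h$ in \eqref{2.75a} must be read as the derivative with respect to the quasi-momentum $\vec t$ (equivalently, with respect to arc length on the sphere $S(k)$ of radius $k$), not with respect to arc length on $S_{n-1}$; these two normalizations differ by exactly a factor of $k$, and with the former convention the gain $k^{n-1+2\delta}$ per $\vec t$-derivative in Corollary \ref{derivatives} reproduces $-\gamma_1+(n-1+2\delta)=\delta-2\gamma_2$ precisely. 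So either rescale your tangential parameter to arc length on $S(k)$, or state explicitly that $\nabla_{\vec\nu}$ is the $\vec t$-gradient, before claiming the exponents close. Items (2) and (3) are sound; the only small point worth a line is that Lemma \ref{lemma t 4l>n+1} gives the complement of $\chi_1$ measure $O(k^{-\delta/8})$, which matches \eqref{B} only after relabelling the (generic) $\delta>0$ there, since $\lambda^{-\delta'}=k^{-2l\delta'}$.
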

The proof is based on Implicit Function Theorem.
%
%
%
%
%
%
%
%
%
%
%
\section{Proof of The Main Result}
First, we prove that  $\{W_{m}\}_{m=0}^{\infty}$ in \eqref{def of successive sequence A 4l>n+1} is a Cauchy sequence with respect to  the norm defined by (\ref{def of star norm 4l>n+1}).
%
%
%
Further we  need the following  obvious properties of norm  $\|\cdot\|_{*}$:
\begin{align}\label{properties 1 2l>n}
\|f\|_{*}=\|\bar{f}\|_{*},\ \
\|\Re(f)\|_{*}\leq\|f\|_{*},\ \
\|\Im(f)\|_{*}\leq\|f\|_{*},\ \
\|fg\|_{*}\leq \|f\|_{*}\|g\|_{*}.
\end{align}
where $\Re(f)$ and $\Im(f)$ are real and imaginary part for $f$, respectively.
%

%
%
 We define  the value $k_{1}=k_{1}(\|V\|_{*},\delta,\beta)$ as
\begin{equation}
k_{1}(\|V\|_{*},\delta,\beta)=\max\Big\{
\left(16\|V\|_{*}\right)^{1/\gamma _2},\ k_0(\beta,\delta )\Big\},
\label{k_{1}}
\end{equation}
$\gamma_{2}>0$ being defined by \eqref{gamma 2 4l bigger n+1} and
 $k_0(\beta, \delta )$ being as in Corollary \ref{lemma t 4l>n+1}.
\begin{lemma}\label{main lemma  1 2l>n} 
The following inequalities hold for any  $m=1,2,\dots$:
\begin{align}
\|\tilde{W}_{m}-V\|_{*}\leq 8|\sigma||A|^{2}\|V\|_{*}k^{-\gamma_{2}}, \label{mmm}
\end{align}
\begin{align}
\|W_{m}-W_{m-1}\|_{*}\leq 4|\sigma||A|^{2}\|V\|_{*}k^{-\gamma_{2}}(|\sigma ||A|^2k^{-\gamma_{0}})^{m-1}, \label{mm}
\end{align}
\begin{align}\label{estimate of difference of E-a}
 \|E_{{m}}(\vec t)-E_{{m-1}}(\vec t)\|_1 
\leq &8|\sigma||A|^{2}\|V\|_{*}k^{-(2l-n-\delta)-\gamma_{2}}(|\sigma ||A|^2k^{-\gamma_{0}})^{m-1},
\end{align}
where $\gamma_{0}=2l-n-2\delta$, $\delta >0$, and $|\sigma ||A|^2<k^{\gamma _0-\delta }$,  $k$ being sufficiently large $k>k_{1}(\|V\|_{*},\delta,\beta)$.
\end{lemma}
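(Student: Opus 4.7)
My plan is to establish \eqref{mm} and \eqref{estimate of difference of E-a} by joint induction on $m$, after which \eqref{mmm} follows by telescoping and a geometric-series estimate (using the assumption $|\sigma||A|^{2}<k^{\gamma_{0}-\delta}$ to force the common ratio below $\tfrac12$ for $k$ large). The engine is a two-sided step: the change $E_{m+1}-E_{m}$ in the projection is controlled by the change $\tilde W_{m+1}-\tilde W_{m}$ in the potential through the resolvent identity, while conversely $W_{m+1}-W_{m}=\sigma(|u_{m}|^{2}-|u_{m-1}|^{2})$ is controlled by $\|E_{m}-E_{m-1}\|_{1}$ through the explicit Fourier expansion of $u_{m}$ in Corollary \ref{def1}. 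Note that \eqref{mmm} in turn guarantees $\|\tilde W_{m}\|_{*}\leq 2\|V\|_{*}$, so the linear spectral theory of Theorem \ref{thm1 4l>n+1} applies uniformly in $m$ after one extends it from trigonometric polynomials to potentials of finite $\|\cdot\|_{*}$-norm.

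\textbf{Base case $m=1$.} Here $\tilde W_{0}=V$ and $u_{0}$ is the Bloch function of Corollary \ref{def1}. Writing $E(\vec t)=E_{j}+D$ with $\|D\|_{1}\leq c\|V\|_{*}k^{-\gamma_{2}}$ from \eqref{ii} and Remark \ref{r}, I compute the Fourier coefficients of $|u_{0}|^{2}-|A|^{2}$ starting from $u_{0}=A\sum_{n}(\delta_{nj}+D_{nj})e^{i\langle \vec p_{n}(\vec t),\vec x\rangle}$: at $q\neq 0$ the coefficient equals $|A|^{2}\bigl(D_{j+q,j}+\overline{D_{j-q,j}}+\sum_{n}D_{n+q,j}\overline{D_{nj}}\bigr)$, and at $q=0$ it equals $|A|^{2}(2\Re D_{jj}+\sum_{n}|D_{nj}|^{2})$. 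Summing absolute values over $q$ and using that the $\ell^{1}$-norm of the $j$-th column of $D$ is at most $\|D\|_{1}$ yields $\||u_{0}|^{2}-|A|^{2}\|_{*}\leq 4|A|^{2}\|D\|_{1}(1+\|D\|_{1})$, which gives \eqref{mm} at $m=1$.

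\textbf{Inductive step.} Suppose \eqref{mm} and \eqref{estimate of difference of E-a} hold through index $m$. For \eqref{mm} at $m+1$, I apply the same column-sum argument to $|u_{m}|^{2}-|u_{m-1}|^{2}$ with $u_{k}=A\sum_{n}(E_{k})_{nj}e^{i\langle\vec p_{n},\vec x\rangle}$ and $\|E_{k}\|_{1}\leq 3/2$; this gives $\||u_{m}|^{2}-|u_{m-1}|^{2}\|_{*}\leq 3|A|^{2}\|E_{m}-E_{m-1}\|_{1}$, and \eqref{estimate of difference of E-a} at $m$ then yields \eqref{mm} at $m+1$ provided $k^{\delta}\geq 6$ (this is where the identity $\gamma_{0}+\delta=2l-n-\delta$ is used to convert the $k^{-(2l-n-\delta)}$ factor in \eqref{estimate of difference of E-a} into a $k^{-\gamma_{0}}$ factor). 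For \eqref{estimate of difference of E-a} at $m+1$, I apply the resolvent identity $E_{m+1}-E_{m}=\frac{1}{2\pi i}\oint_{C_{0}}(H_{m+1}-z)^{-1}U(H_{m}-z)^{-1}\,dz$ with $U=\tilde W_{m+1}-\tilde W_{m}$, expand both resolvents via the perturbation series of Theorem \ref{thm1 4l>n+1}, and observe that the dominant ($r=1$) contribution $\frac{1}{2\pi i}\oint_{C_{0}}(H_{0}-z)^{-1}U(H_{0}-z)^{-1}\,dz$ is a rank-$\leq 2$ operator whose matrix elements are explicit residues at $z=p_{j}^{2l}(\vec t)$ of the form $U_{jq}/(p_{j}^{2l}-p_{q}^{2l})$, each denominator being $\geq k^{2l-n-\delta}$ by \eqref{in-a}. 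Higher-order terms pick up additional $k^{-\gamma_{2}}$ factors from the refined denominator estimates \eqref{main}, \eqref{ii} and are subdominant. This gives $\|E_{m+1}-E_{m}\|_{1}\leq 2k^{-(2l-n-\delta)}\|W_{m+1}-W_{m}\|_{*}$, and combining with the just-proven \eqref{mm} at $m+1$ yields \eqref{estimate of difference of E-a} at $m+1$.

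\textbf{Main obstacle.} The genuinely delicate point is that Theorem \ref{thm1 4l>n+1} is stated for the trigonometric polynomial $V$, whereas the iterated potentials $\tilde W_{m}$ are typically not trigonometric polynomials. One needs to extend the construction (including the fine denominator estimate \eqref{main} and the bounds \eqref{ii}) to potentials with $\|\cdot\|_{*}<\infty$, with constants depending only on $\|V\|_{*}$. The bound $\|\tilde W_{m}\|_{*}\leq 2\|V\|_{*}$ coming from \eqref{mmm} is what makes this extension uniform in $m$; once it is in place, the rest of the induction is largely mechanical bookkeeping.
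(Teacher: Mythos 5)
Your proposal is correct and follows essentially the same two-way inductive scheme as the paper: bound $\|W_{m}-W_{m-1}\|_{*}$ by $\|E_{m-1}-E_{m-2}\|_{1}$ through the Fourier expansion of the periodic part $\psi$, bound $\|E_{m}-E_{m-1}\|_{1}$ by $\|\tilde W_{m}-\tilde W_{m-1}\|_{*}$ through the perturbation series on $C_{0}$, and obtain \eqref{mmm} by telescoping. The only technical variations are cosmetic: the paper handles the base case by factoring $|\psi_0|^2-|A|^2=\Re\bigl[(\psi_0-A)(\bar\psi_0+\bar A)\bigr]$ and using submultiplicativity of $\|\cdot\|_*$ rather than expanding Fourier coefficients of $|u_0|^2$ directly, and for the $E$-step it bounds $\|B_{m-1}^r-B_{m-2}^r\|_1\leq\|B_{m-1}-B_{m-2}\|_1(\|B_{m-1}\|_1+\|B_{m-2}\|_1)^{r-1}$ and sums the geometric series rather than invoking the second resolvent identity between $H_{m}$ and $H_{m-1}$; both yield the same estimate $\|E_{m}-E_{m-1}\|_1\leq 2k^{-(2l-n-\delta)}\|\tilde W_{m}-\tilde W_{m-1}\|_*$. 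One point to tighten in your write-up: since the bound $\|\tilde W_{m}\|_*\leq 2\|V\|_*$ (i.e., \eqref{mmm} at level $m$) is needed to justify the perturbation series for $E_{m}$, you should record \eqref{mmm} inside each inductive step by telescoping the already-proved $\|W_s-W_{s-1}\|_*$ estimates up to $s=m$, exactly as the paper does, rather than deferring \eqref{mmm} entirely to the end.
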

\begin{corollary}\label{cauchy sequence 2l>n} 
There is a periodic function $W $   such that $W_{m}$ converges to $W$ with respect to the norm $\|\cdot\|_{*}$:
\begin{align}
\|W-W_{m}\|_{*}\leq 8|\sigma||A|^{2}\|V\|_{*}k^{-\gamma_{2}}(|\sigma ||A|^2k^{-\gamma_{0}})^{m}. \label{mm+}
\end{align}
\end{corollary}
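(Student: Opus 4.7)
The plan is to derive the corollary directly from estimate \eqref{mm} in Lemma \ref{main lemma 1 2l>n} by a telescoping argument followed by a geometric series bound; no new analytic input is required.

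First, I would observe that the space
\[
\mathcal{A}=\Big\{f:Q\to\C\ :\ f(\vec x)=\sum_{q\in\Z^n}f_q e^{i\la q,\vec x\ra},\ \|f\|_*<\infty\Big\}
\]
is a Banach algebra under $\|\cdot\|_*$ (completeness is the $\ell^1(\Z^n)$ statement for the Fourier coefficients, and the algebra property is the fourth inequality in \eqref{properties 1 2l>n}). All $W_m$ lie in $\mathcal A$ and are periodic with cell $Q$, so it suffices to show that $\{W_m\}$ is Cauchy in $\mathcal A$.

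For $M>m\geq 0$, telescope and apply the triangle inequality together with \eqref{mm}:
\begin{equation*}
\|W_M-W_m\|_*\ \leq\ \sum_{j=m+1}^{M}\|W_j-W_{j-1}\|_*\ \leq\ 4|\sigma||A|^{2}\|V\|_{*}k^{-\gamma_{2}}\sum_{j=m+1}^{M}\big(|\sigma||A|^2 k^{-\gamma_0}\big)^{j-1}.
\end{equation*}
Under the hypothesis $|\sigma||A|^2<k^{\gamma_0-\delta}$ of Lemma \ref{main lemma 1 2l>n}, the ratio $q:=|\sigma||A|^2k^{-\gamma_0}$ satisfies $q<k^{-\delta}\leq 1/2$ once $k>k_1$ (possibly enlarging $k_1$ by an absolute constant to absorb $2^{1/\delta}$). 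Hence the geometric tail is bounded by
\begin{equation*}
\sum_{j=m+1}^{M}q^{\,j-1}\ \leq\ \frac{q^{m}}{1-q}\ \leq\ 2q^{m},
\end{equation*}
which yields
\begin{equation*}
\|W_M-W_m\|_*\ \leq\ 8|\sigma||A|^{2}\|V\|_{*}k^{-\gamma_{2}}\big(|\sigma||A|^2 k^{-\gamma_0}\big)^{m}.
\end{equation*}

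This shows $\{W_m\}$ is Cauchy in $\mathcal A$, so there exists $W\in\mathcal A$ with $\|W-W_m\|_*\to 0$; letting $M\to\infty$ in the inequality above gives \eqref{mm+}. Periodicity of $W$ is immediate since each $W_m$ has Fourier support in $\Z^n$ and $\|\cdot\|_*$-convergence preserves this. The only nontrivial ingredient is \eqref{mm}, already supplied by Lemma \ref{main lemma 1 2l>n}; everything else is routine, so there is no genuine obstacle in this step.
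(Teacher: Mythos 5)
Your proof is correct and is essentially the argument the paper intends (the paper leaves it implicit that the corollary follows from \eqref{mm} by telescoping and summing a geometric series). The only small point worth noting is that the factor $2$ you use for $\sum_{j\geq m+1} q^{j-1} \leq 2q^m$ requires $q=|\sigma||A|^2 k^{-\gamma_0}\leq 1/2$, which follows from $|\sigma||A|^2<k^{\gamma_0-\delta}$ once $k\geq 2^{1/\delta}$; this is absorbed into ``$k$ sufficiently large,'' exactly as you observe.
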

\begin{proof}[Proof of Lemma 3.1]
Let us consider the function \eqref{def of u 2l>n}  written in the form
\begin{align}\label{expression u 2l>n 2}
u_0(\vec{x})=\psi _0(\vec{x})e^{i\langle{ \vec{p}_{j}(\vec t), \vec{x} }\rangle},
\end{align}
where 
\begin{align}\label{change cordinator 2l>n*}
\psi _0(\vec{x})=A\sum_{q\in \mathbb{Z}^{n}}E(\vec t)_{j+q,j}e^{i\langle{ \vec{p}_{q}(\vec{0}), \vec{x} }\rangle},
\end{align}
is  called the periodic part of $u_0$. 

First, we prove \eqref{mm} for $m=1$. It follows from (\ref{def of A 4l>n+1}), (\ref{def of successive sequence A 4l>n+1}) and  (\ref{properties 1 2l>n}) that 
\begin{align}\label{change cordinator 2 2l>n}
\nonumber \big\|W_{1}-W_{0}\big\|_{*} =&
|\sigma| \big\||u_{0}|^{2}-|A|^2\big\|_{*}=|\sigma| \big\||\psi_{0}|^{2}-|A|^2\big\|_{*}\\
\leq& \nonumber|\sigma| \big\||\psi_{0}|^{2}-|A|^2+2i\Im(\bar A\psi_{0})\big\|_{*}=\nonumber |\sigma| \big\|(\psi_{0}-A)(\bar{\psi}_{0}+\bar A)\big\|_{*}\\
\leq& |\sigma|\big\|\psi_{0}-A\big\|_{*}\big\|\bar{\psi}_{0}+\bar A\big\|_{*}.
\end{align}
%
Let us consider
\begin{equation}B_0(z)=(H_0(\vec t)-z)^{-\frac{1}{2}}V(H_0(\vec t)-z)^{-\frac{1}{2}}.\label{M17b*} \end{equation}
Then it follows from Lemma \ref{lemma t 4l>n+1} that:
\begin{align}\label{l}
\max_{z\in C_{0}}\Big\|(H_{0}(\vec t)-z)^{-1}\Big\|_{1}<k^{-2l+n+\delta}.
\end{align}
\begin{equation} \label{M17c*}
\max_{z\in C_{0}}\|B_0(z)\|_1<\|V\|_* k^{-\gamma_2},
\end{equation}
$\gamma _2$ being defined by \eqref{gamma 2 4l bigger n+1}. 
By (\ref{series E formula 4l>n+1}) and (\ref{M17b*}),
\begin{align}\label{proj op c 2l>n*}
G_{r}(k,\vec t)=\frac{(-1)^{r+1}}{2\pi i}\oint_{C_{0}}(H_0(\vec t)-z)^{-\frac{1}{2}}B_0(z)^{r}(H_0(\vec t)-z)^{-\frac{1}{2}}dz.
\end{align}
It is easy to see that
\begin{equation}
\|G_{r}(
k,\vec t)\|_1<\|V \|_*^rk^{-\gamma_{2}r}. \label{M17e*}
\end{equation}
%
%
%
%
%
%
%
Next, by \eqref{change cordinator 2l>n*}, \eqref{series E formula 4l>n+1} and \eqref{ii},
\begin{align}\label{est psiv 2l>n*}
\nonumber \|\psi_{0}-A\|_{*}
 \leq &\Big|AE(\vec t)_{jj}-A\Big|+|A|\sum_{q\in\mathbb{Z}^{n}\setminus\{0\}}\Big|E_{}(\vec t)_{j+q,j}\Big| 
 \\
\leq& |A| \sum_{r=1}^{\infty} \|G_{r}(k,\vec t)\|_{1}
 \leq \|V\|_{*}|A|k^{-\gamma_{2}}(1+o(1)).
\end{align}
%
It follows:
\begin{align}\label{est bar psiv 2l>n*}
\|{\psi}_{0}\|_{*}=\|\bar{\psi}_{0}\|_{*}\leq |A|+ O\big(|A|k^{-\gamma_{2}}\big).
\end{align}
Using (\ref{change cordinator 2 2l>n}),  (\ref{est psiv 2l>n*}) and (\ref{est bar psiv 2l>n*}), we get 
\begin{align*}
&\|W_{1}-W_{0}\|_{*}
\leq 4|\sigma||A|^2\|V\|_{*}k^{-\gamma_{2}}.
\end{align*}
Since $\|\tilde{W}_{1}-V\|_{*}=\|\tilde{W}_{1}-\tilde{W}_{0}\|_{*}\leq \|W_{1}-W_{0}\|_{*}$, we have:
\begin{align}\label{54.5}
&\|\tilde{W}_{1}-V\|_{*} \leq 4|\sigma||A|^2\|V\|_{*}k^{-\gamma_{2}}.
\end{align}
Now, we use  mathematical induction to show simultaneously,
\begin{equation}
\|\tilde{W}_{m}-V\|_{*} \leq 8|\sigma||A|^{2}\|V\|_{*}k^{-\gamma_{2}}, \label{M19b11}
\end{equation}
\begin{equation}
\|W_{m}-W_{m-1}\|_{*} \leq 4|\sigma||A|^{2}\|V\|_{*}k^{-\gamma_{2}}(|\sigma ||A|^2k^{-\gamma_{0}} )^{m-1}. \label{M19b 111}
\end{equation}
Suppose that for all $1\leq s\leq m-1$,
\begin{equation}
\|\tilde{W}_{s}-V\|_{*} \leq 8|\sigma||A|^{2}\|V\|_{*}k^{-\gamma_{2}}, \label{M19b a}
\end{equation}
\begin{equation}
\|W_{s}-W_{s-1}\|_{*} \leq 4|\sigma||A|^{2}\|V\|_{*}k^{-\gamma_{2}}(|\sigma ||A|^2k^{-\gamma_{0}} )^{s-1}. \label{M19b}
\end{equation}
%
%
%
Let, by analogy with \eqref{def of u 2l>n},
\begin{align}\label{expression u 2l>n 1}
u_{{s}}(\vec{x}):=A\sum\limits_{m \in \mathbb{Z}^{n}}E_{{s}}(\vec t)_{m,j}e^{i\langle{ \vec{p}_{m}(\vec t), \vec{x} }\rangle}, 
\end{align}
where $E_{{s}}(\vec t)$ is the spectral projection \eqref{series E 4l>n+1} with the potential $\tilde{W}_{s}$.
Obviously,
\begin{align}\label{expression u 2l>n 2}
u_{{s}}(\vec{x})=\psi_{{s}}(\vec{x})e^{i\langle{ \vec{p}_{j}(\vec t), \vec{x} }\rangle},
\end{align}
where the function, 
\begin{align}\label{change cordinator 2l>n}
\psi_{{s}}(\vec{x})=A\sum_{q\in \mathbb{Z}^{n}}E_{{s}}(\vec t)_{j+q,j}e^{i\langle{ \vec{p}_{q}(\vec{0}), \vec{x} }\rangle},
\end{align}
is   the periodic part of $u_{{s}}$. Clearly,
\begin{equation}\|\psi_{{s}}\|_*\leq  |A|\|E_{{s}}(\vec t)\|_1.\label{M19a}
\end{equation}
Let
\begin{equation}B_s(z)=(H_0(\vec t)-z)^{-\frac{1}{2}}\tilde W_s (H_0(\vec t)-z)^{-\frac{1}{2}}.\label{M17b} \end{equation}
Using     \eqref{M19b a},  \eqref{M17c*} and \eqref{in-a}, we easily obtain:
\begin{equation} \label{M17c}
\|B_s(z)\|_1\leq 8|\sigma||A|^{2} \|V\|_{*}k^{-2l+n+\delta-\gamma_2}+\|V\|_*k^{-\gamma_2}\leq 2\|V\|_*k^{-\gamma_2},\ \  z \in C_0,
\end{equation}
for any  $1\leq s\leq m-1$.
It is easy to see now that
\begin{equation}
\|G_{s,r}(k,\vec t)\|_1\leq \big(4\|V\|_*k^{-\gamma_2}\big)^{r},\ \ 1\leq s\leq m-1, \label{M18a}
\end{equation}
here  $G_{s,r}(k,\vec t)$ is given by  \eqref{series E formula 4l>n+1}
with $\tilde W_s$ instead of $V$.
It follows:
\begin{align}\label{estimate of E}
\nonumber \|E_{{s}}(\vec t)\|_1\leq&
1+\sum_{r=1}^{\infty}\|G_{{s},r}(k,\vec t)\|_1 \\
\leq& 1+8\|V\|_*k^{-\gamma_2}\leq2, \ \ 1\leq s\leq m-1.
\end{align}
Next, we note that
\begin{align}\label{estimate of difference of B}
\nonumber& \max_{z\in C_{0}}\|B_{m-1}^r(z)-B^r_{m-2}(z)\|_1\\
\nonumber\leq& \max_{z\in C_{0}}\|B_{m-1}(z)-B_{m-2}(z)\|_1\left(\|B_{m-1}(z)\|_1+\|B_{m-2}(z)\|_1\right)^{r-1}\\
\leq  &k^{-(2l-n-\delta)}\|\tilde{W}_{m-1}-\tilde{W}_{m-2}\|_{*}\Big(4\|V\|_*k^{-\gamma_2}\Big)^{r-1}.
\end{align}
Hence,
\begin{equation}\label{estimate of difference of G}
\|G_{{m-1},r}(k,\vec t)-G_{{m-2},r}(k,\vec t)\|_1\leq k^{-(2l-n-\delta)}\|\tilde{W}_{m-1}-\tilde{W}_{m-2}\|_{*}\Big(4\|V\|_*k^{-\gamma_2}\Big)^{r-1}.
\end{equation}
Estimate  (\ref{estimate of difference of G}) yields:
%
%
%
\begin{align}\label{estimate of difference of E}
 \nonumber \|E_{{m-1}}(\vec t)-E_{{m-2}}(\vec t)\|_1 
\leq &\nonumber\sum_{r=1}^{\infty}
\|G_{{m-1},r}(k,\vec t)-G_{{m-2},r}(k,\vec t)\|_1\\
\leq &2k^{-(2l-n-\delta)}\|\tilde{W}_{m-1}-\tilde{W}_{m-2}\|_{*}.
\end{align}
%
%
%
%
%
Next, considering as in \eqref{change cordinator 2 2l>n},  we obtain:
\begin{align}
\big\|W_{m}-W_{m-1}\big\|_{*}
\leq |\sigma|\big\|\psi_{{m-1}}-\psi_{{m-2}}\big\|_{*}\big\|\bar{\psi}_{{m-1}}+\bar{\psi}_{{m-2}}\big\|_{*},\label{**}
\end{align}
and,
 hence, by \eqref{change cordinator 2l>n},
 \begin{equation} \label{W-m}
 \|W_{m}-W_{m-1}\|_{*}\leq |\sigma ||A|^2\|E_{{m-1}}(\vec t)-E_{{m-2}}(\vec t)\|_1
\left(\|E_{{m-1}}(\vec t)\|_1+
\|E_{{m-2}}(\vec t)\|_1\right).
\end{equation}
%
Using  (\ref{estimate of E}) and (\ref{estimate of difference of E}),  we obtain
\begin{align}
\|W_{m}-W_{m-1}\|_{*}
\leq &8|\sigma||A|^2 k^{-(2l-n-\delta)}\|\tilde{W}_{m-1}-\tilde{W}_{m-2}\|_{*}. \label{***}\end{align}
Considering  $\|\tilde{W}_{m-1}-\tilde{W}_{m-2}\|_{*}\leq \|{W}_{m-1}-{W}_{m-2}\|_{*}$ and using \eqref{M19b} for $s=m-1$, we arrive at the estimate:
\begin{align} \label{main results 123}
 \|W_{m}-W_{m-1}\|_{*}\leq &8|\sigma||A|^2 k^{-(2l-n-\delta)}4|\sigma||A|^{2}\|V\|_{*}k^{-\gamma_{2}}\big(|\sigma||A|^2k^{-\gamma_{0}}\big)^{m-2}\\
\nonumber\leq & 4|\sigma||A|^{2}\|V\|_{*}k^{-\gamma_{2}}\big(|\sigma||A|^2k^{-\gamma_{0}}\big)^{m-1}, 
\end{align}
when $k>k_{1}(\|V\|_{*},\delta,\beta)$.
Further, \eqref{main results 123} and \eqref{54.5} enable the estimate
\begin{align*} 
\nonumber \|\tilde{W}_{m}-V\|_{*}\leq& \|\tilde{W}_{m}-\tilde{W}_{m-1}\|_{*}+\|\tilde{W}_{m-1}-\tilde{W}_{m-2}\|_{*}+\cdots+\|\tilde{W}_{1}-V\|_{*}\\
\leq &8|\sigma||A|^{2}\|V\|_{*}k^{-\gamma_{2}},
\end{align*}
which completes the proof of \eqref{mmm} and \eqref{mm}. Using \eqref{estimate of difference of E}, we obtain \eqref{estimate of difference of E-a}.
\end{proof}
\begin{lemma} \label{lemma3 2l>n}
Suppose $\vec t$ belongs to the $(k^{-n+1-2\delta })$-neighborhood in $K$ 
of the 
non-resonant set $\chi _1(k,\beta,\delta )$. Then for every sufficiently 
large $k>k_{1}(\|V\|_{*},\delta,\beta)$ and every $A\in \C: |\sigma||A|^2 <k^{\gamma_0-\delta }$,  the sequence $E_m(\vec t)$ converges with respect to $\|\cdot \|_1$ to a one-dimensional spectral projection $E_{\tilde W}(\vec t)$ of $H_0(t)+\tilde W$:
\begin{equation}
\|E_m(\vec t)-E_{\tilde W}(\vec t)\|_{1}\leq 
8\|V\|_*k^{-\gamma_2}(|\sigma ||A|^2k^{-\gamma_{0}})^{m+1}. \label{mm++}
\end{equation} 
The projection $E_{\tilde W}(\vec t)$ is given by the series  \eqref{series E 4l>n+1}, \eqref{series E formula 4l>n+1} with $\tilde W$ instead of $V$. The series converges  with respect to $\|\cdot \|_1$:
\begin{align}\label{ii-b}
\|G_{r}(k,\vec t)\|_{1}\leq \left(2\|V\|_*k^{-\gamma_{2}}\right)^{r}\end{align}
\end{lemma}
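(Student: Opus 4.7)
My plan is to prove the lemma in three steps: (i) establish that $\{E_m(\vec t)\}_{m=1}^{\infty}$ is Cauchy in $\|\cdot\|_1$ and deduce the explicit rate \eqref{mm++}; (ii) construct directly a convergent perturbation series for the operator $H_0(\vec t)+\tilde W$ and prove \eqref{ii-b}; (iii) identify the abstract Cauchy limit from (i) with the operator built in (ii) and verify that it is the one-dimensional Riesz spectral projection associated with the contour $C_0$.

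For step (i), I telescope $E_{M}-E_m=\sum_{s=m+1}^M(E_s-E_{s-1})$ and invoke the bound \eqref{estimate of difference of E-a}. Using $\gamma_0=2l-n-2\delta$, so that $2l-n-\delta=\gamma_0+\delta$, each term rewrites as
$$8\|V\|_* k^{-\gamma_2-\delta}\bigl(|\sigma||A|^2k^{-\gamma_0}\bigr)^s.$$
The hypothesis $|\sigma||A|^2<k^{\gamma_0-\delta}$ makes the ratio $|\sigma||A|^2k^{-\gamma_0}<k^{-\delta}<1/2$ for $k>k_1$, so the geometric sum from $s=m+1$ to $\infty$ converges and gives \eqref{mm++} (the extra factor $k^{-\delta}$ is absorbed into the constant). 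Cauchy-ness, and hence the existence of a limit $E_\infty(\vec t)$ in $\|\cdot\|_1$, follow immediately.

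For step (ii), passing to $m\to\infty$ in \eqref{mmm} yields $\|\tilde W-V\|_*\leq 8|\sigma||A|^2\|V\|_*k^{-\gamma_2}$, and hence $\|\tilde W\|_*\leq 2\|V\|_*$ for $k>k_1$. Mirroring \eqref{M17b*}--\eqref{M17e*} with $\tilde W$ in place of $V$, I set $B(z)=(H_0(\vec t)-z)^{-1/2}\tilde W(H_0(\vec t)-z)^{-1/2}$ for $z\in C_0$. Since Lemma \ref{lemma t 4l>n+1} is a statement about $H_0(\vec t)$ alone, the same derivation that led to \eqref{M17c*} gives $\|B(z)\|_1\leq \|\tilde W\|_*k^{-\gamma_2}\leq 2\|V\|_*k^{-\gamma_2}$. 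Inserting into the analogue of \eqref{proj op c 2l>n*} yields \eqref{ii-b}. Because $2\|V\|_*k^{-\gamma_2}<1/8$ by the choice \eqref{k_{1}} of $k_1$, the series $E_j+\sum_{r\geq 1}G_r(k,\vec t)$ converges absolutely in $\|\cdot\|_1$; call its sum $\widehat E(\vec t)$. Re-summing the Neumann series under the integral identifies it with the Riesz projection
$$\widehat E(\vec t)=-\frac{1}{2\pi i}\oint_{C_0}\bigl(H_0(\vec t)+\tilde W-z\bigr)^{-1}\,dz,$$
so $\widehat E(\vec t)$ is the spectral projection of $H_0(\vec t)+\tilde W$ for the part of the spectrum enclosed by $C_0$. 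The estimate $\|\widehat E-E_j\|_1<1$ forces $\rank \widehat E=\rank E_j=1$, hence this spectrum consists of a single simple eigenvalue $\lambda_{\tilde W}\in\varepsilon(k,\delta)$.

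For step (iii), I repeat the calculation \eqref{estimate of difference of B}--\eqref{estimate of difference of E} with $\tilde W$ in place of $\tilde W_{m-1}$; the estimate $\|\tilde W-\tilde W_m\|_*\leq \|W-W_m\|_*$ together with \eqref{mm+} gives term-by-term control $\|G_{m,r}(k,\vec t)-G_r(k,\vec t)\|_1\to 0$, and the uniform tail bound \eqref{ii-b} permits swapping sum and limit. Hence $E_m\to \widehat E$ in $\|\cdot\|_1$, so uniqueness of the Cauchy limit yields $E_{\tilde W}:=\widehat E=E_\infty$, completing the proof. The main obstacle is step (ii): Theorem \ref{thm1 4l>n+1} and Lemma \ref{lemma t 4l>n+1} were stated for trigonometric-polynomial potentials, whereas $\tilde W$ is merely a $\|\cdot\|_*$-limit of such polynomials and its Fourier support is a priori infinite. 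What makes the extension automatic is that every non-resonance bound from \cite{K97} depends only on $H_0(\vec t)$ on the contour $C_0$ and on the star-norm of the perturbation (through the Toeplitz structure of multiplication by a periodic function on Fourier coefficients); no finite support of $\hat V$ is used. Verifying this transport carefully is the only delicate point in the argument.
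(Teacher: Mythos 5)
Your proposal is correct and follows essentially the same route as the paper's proof: both obtain \eqref{ii-b} by passing $\|B_m(z)\|_1\leq 2\|V\|_*k^{-\gamma_2}$ to the limit $B(z)$, both obtain \eqref{mm++} by summing the increments $\|E_s-E_{s-1}\|_1$, and both identify the limit with the projection given by the perturbation series. Your write-up simply fills in details the paper treats as immediate --- the telescoping/geometric-sum bookkeeping, the verification that the Riesz projection is one-dimensional via $\|\widehat E-E_j\|_1<1$, and the remark that the non-resonance bounds of Lemma~\ref{lemma t 4l>n+1} depend only on $H_0(\vec t)$ and $\|\cdot\|_*$ rather than on the finite support of $\hat V$, which is indeed the one point the paper defers to its closing remark.
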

\begin{proof}
Let  $B(z)$ be given by \eqref{M17b} with $\tilde W$ instead of $\tilde W _s$.  Obviously, $B(z)$ is the limit of $B_m(z)$  in $\|\cdot \|_1$-norm. The estimate \eqref{M17c} yields:
\begin{equation} \label{M17c-1}
\|B(z)\|_1\leq 2\|V\|_*k^{-\gamma_2},\ \  z \in C_0.
\end{equation}
It follows that the perturbation series for the resolvent of $H_0(\vec t)+W$ converges with respect to $\|\cdot \|_1$ norm on $C_0$. Integrating the series of $z$ we obtain that $E(t)$ admits the expansion  \eqref{series E 4l>n+1}, \eqref{series E formula 4l>n+1} and \eqref{ii-b} holds.  Obviously, $G_r$ corresponding to $\tilde W$ is the limit of  $G_{m,r}$ in $\|\cdot\|_1$ norm.
Summing the estimates \eqref{estimate of difference of E}, we obtain \eqref{mm++}.

\end{proof}

%
%
%
\begin{definition} \label{def2} Let $u(\vec x)$ be defined as in Corollary  \ref{def1} for the potential  $\tilde{W}(\vec x)$. Let $\psi (\vec x)$ be the periodic part of $u(\vec x)$.\end{definition}

The next lemma follows from the estimate \eqref{mm++}.

\begin{lemma} \label{lemma3 2l>n'}
Suppose $\vec t$ belongs to the $(k^{-n+1-2\delta })$-neighborhood in $K$ 
of the 
non-resonant set $\chi _1(k,\beta,\delta )$. Then for every sufficiently 
large $k>k_{1}(\|V\|_{*},\delta,\beta)$ and every $A\in \C: |\sigma||A|^2 <k^{\gamma_0-\delta }$, the sequence $\psi _m(\vec x)$ converges to the function $\psi (\vec x)$ with respect to $\|\cdot \|_*$:
\begin{equation}
\|\psi_{{m}}-\psi \|_{*}\leq 8|A|\|V\|_{*}k^{-\gamma_{2}}(|\sigma ||A|^2k^{-\gamma_{0}})^{m+1}. \label{mm+++}
\end{equation}
\end{lemma}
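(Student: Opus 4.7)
The plan is to reduce the $\|\cdot\|_*$-convergence of the periodic parts $\psi_m$ to the $\|\cdot\|_1$-convergence of the spectral projections $E_m(\vec t)$ to $E_{\tilde W}(\vec t)$, which has already been established in Lemma \ref{lemma3 2l>n}. The proof should therefore be short and essentially algebraic.

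First I would write out the Fourier expansions of both functions. By \eqref{change cordinator 2l>n},
$$\psi_m(\vec x) = A \sum_{q \in \Z^n} E_m(\vec t)_{j+q,\, j}\, e^{i\langle \vec p_q(\vec 0), \vec x\rangle},$$
while Definition \ref{def2}, combined with Corollary \ref{def1} and the series \eqref{series E 4l>n+1}--\eqref{series E formula 4l>n+1} applied with $\tilde W$ in place of $V$, gives the analogous expansion
$$\psi(\vec x) = A \sum_{q \in \Z^n} E_{\tilde W}(\vec t)_{j+q,\, j}\, e^{i\langle \vec p_q(\vec 0), \vec x\rangle}.$$
The convergence of the latter series is ensured by \eqref{ii-b}, so both sides are well defined in the space of trigonometric series with finite $\|\cdot\|_*$-norm. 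Subtracting and using \eqref{def of star norm 4l>n+1} gives
$$\|\psi_m - \psi\|_* \; =\; |A| \sum_{q \in \Z^n} \bigl| E_m(\vec t)_{j+q,\, j} - E_{\tilde W}(\vec t)_{j+q,\, j} \bigr|.$$

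Second, I would observe that the column sum on the right is bounded by the matrix norm $\|\cdot\|_1$ of Remark \ref{r}: taking $i = j$ in the definition $\|T\|_1 = \max_i \sum_p |T_{pi}|$ and reindexing $p = j+q$,
$$\|\psi_m - \psi\|_* \;\leq\; |A|\, \|E_m(\vec t) - E_{\tilde W}(\vec t)\|_1 .$$
Substituting the estimate \eqref{mm++} from Lemma \ref{lemma3 2l>n} yields
$$\|\psi_m - \psi\|_* \;\leq\; 8|A|\,\|V\|_* k^{-\gamma_2}\bigl(|\sigma||A|^2 k^{-\gamma_0}\bigr)^{m+1},$$
which is exactly \eqref{mm+++}.

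There is no serious obstacle here: the analytic content --- the contraction-style bound on $E_m - E_{\tilde W}$ and the well-definedness of $E_{\tilde W}$ in a neighborhood of the nonresonant set --- has been carried out in Lemma \ref{lemma3 2l>n}. The only point that must not be skipped is the elementary passage from $\|\cdot\|_*$ of a Fourier series to a single column sum of the projection matrix, reducing everything to the $\|\cdot\|_1$-norm defined in Remark \ref{r}.
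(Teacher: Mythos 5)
Your proof is correct and takes exactly the same route as the paper, which disposes of this lemma with the remark that it ``follows from the estimate \eqref{mm++}'': the reduction $\|\psi_m-\psi\|_*\le |A|\,\|E_m(\vec t)-E_{\tilde W}(\vec t)\|_1$ (a column sum of the matrix difference, bounded by the $\|\cdot\|_1$-norm of Remark \ref{r}) is the same step the paper already uses implicitly in \eqref{M19a} and explicitly in \eqref{formula diff uw 2l>n}. You merely write out the Fourier-coefficient bookkeeping that the paper treats as obvious.
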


\begin{corollary} \label{3.7} The sequence
$u_{{m}}$ converges to $u_{}$ in $L^{\infty}(Q)$.\end{corollary}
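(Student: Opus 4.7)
The plan is extremely short: factor out the unimodular plane wave, reduce $L^\infty$-convergence of $u_m$ to $u$ to $\|\cdot\|_*$-convergence of the periodic parts $\psi_m$ to $\psi$, and then invoke Lemma \ref{lemma3 2l>n'}.

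Concretely, I would first recall that by \eqref{expression u 2l>n 2} and Definition \ref{def2}, both $u_m$ and $u$ admit the factorizations
\begin{equation*}
u_m(\vec x)=\psi_m(\vec x)\,e^{i\langle \vec p_j(\vec t),\vec x\rangle},\qquad u(\vec x)=\psi(\vec x)\,e^{i\langle \vec p_j(\vec t),\vec x\rangle},
\end{equation*}
where the exponential has modulus one. Hence for every $\vec x \in Q$,
\begin{equation*}
|u_m(\vec x)-u(\vec x)|=|\psi_m(\vec x)-\psi(\vec x)|.
\end{equation*}
Next, since $\psi_m-\psi$ is a $Q$-periodic function with Fourier expansion $\sum_{q\in\Z^n} c_q\, e^{i\langle \vec p_q(\vec 0),\vec x\rangle}$, the triangle inequality applied pointwise to this series gives the standard bound
\begin{equation*}
\|\psi_m-\psi\|_{L^\infty(Q)}\leq \sum_{q\in\Z^n}|c_q|=\|\psi_m-\psi\|_*.
\end{equation*}

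Combining these two observations yields $\|u_m-u\|_{L^\infty(Q)}\leq \|\psi_m-\psi\|_*$, and then applying Lemma \ref{lemma3 2l>n'} directly gives the quantitative estimate
\begin{equation*}
\|u_m-u\|_{L^\infty(Q)}\leq 8|A|\,\|V\|_*\,k^{-\gamma_2}\bigl(|\sigma||A|^2k^{-\gamma_0}\bigr)^{m+1},
\end{equation*}
which tends to zero as $m\to\infty$ under the smallness assumption $|\sigma||A|^2<k^{\gamma_0-\delta}$. Thus $u_m\to u$ in $L^\infty(Q)$.

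There is essentially no obstacle here: all the real work has already been done in Lemmas \ref{main lemma  1 2l>n}--\ref{lemma3 2l>n'}, which established the geometric decay of $\|\psi_m-\psi\|_*$ with an explicit contraction factor $|\sigma||A|^2k^{-\gamma_0}<1$. The only thing to notice is the elementary but crucial fact that the $\|\cdot\|_*$ norm dominates the sup norm for absolutely convergent Fourier series, so convergence in $\|\cdot\|_*$ automatically upgrades to uniform convergence, and that the plane-wave factor $e^{i\langle \vec p_j(\vec t),\vec x\rangle}$ drops out of the modulus.
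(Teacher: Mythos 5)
Your proof is correct and is precisely the argument the paper intends but leaves unwritten: Corollary~\ref{3.7} is stated as an immediate consequence of Lemma~\ref{lemma3 2l>n'}, using exactly the two observations you make (the unimodular plane-wave factor drops out of the modulus, and $\|\cdot\|_{L^\infty(Q)}\le\|\cdot\|_*$ for periodic functions with absolutely summable Fourier coefficients). Nothing to add.
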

\begin{corollary}\label{AW=W 2l>n} 
$${\mathcal M}W=W.$$
\end{corollary}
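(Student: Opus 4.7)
The plan is to pass to the limit $m\to\infty$ in the recursive relation $W_{m+1}=\mathcal{M}W_m$ using the convergence results already established, and to verify that the two sides converge in the $\|\cdot\|_*$ norm to $W$ and $\mathcal{M}W$ respectively.

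First I would unpack the right-hand side. By definition \eqref{def of A 4l>n+1} and \eqref{def of successive sequence A 4l>n+1},
\begin{equation*}
W_{m+1}=\mathcal{M}W_m=V+\sigma|u_{\tilde W_m}|^2=V+\sigma|u_m|^2,
\end{equation*}
and since $u_m(\vec x)=\psi_m(\vec x)\,e^{i\langle\vec p_j(\vec t),\vec x\rangle}$ we have $|u_m|^2=|\psi_m|^2$. Similarly, by Definition \ref{def2}, $|u_{\tilde W}|^2=|\psi|^2$. Thus the identity $\mathcal{M}W=W$ is equivalent to $W=V+\sigma|\psi|^2$, and what needs to be shown is
\begin{equation*}
\lim_{m\to\infty}\bigl\|W_{m+1}-(V+\sigma|\psi|^2)\bigr\|_*=0,
\end{equation*}
since by Corollary \ref{cauchy sequence 2l>n} the left-hand side $W_{m+1}$ already converges to $W$ in $\|\cdot\|_*$, and limits in $\|\cdot\|_*$ are unique.

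Next I would estimate $\||\psi_m|^2-|\psi|^2\|_*$. Using the multiplicative property $\|fg\|_*\leq\|f\|_*\|g\|_*$ and $\|\bar f\|_*=\|f\|_*$ from \eqref{properties 1 2l>n}, I write
\begin{equation*}
\bigl\||\psi_m|^2-|\psi|^2\bigr\|_*=\bigl\|(\psi_m-\psi)\bar\psi_m+\psi(\bar\psi_m-\bar\psi)\bigr\|_*\leq\|\psi_m-\psi\|_*\bigl(\|\psi_m\|_*+\|\psi\|_*\bigr).
\end{equation*}
By Lemma \ref{lemma3 2l>n'}, $\|\psi_m-\psi\|_*\to 0$, and the factor $\|\psi_m\|_*+\|\psi\|_*$ is uniformly bounded (for instance by $4|A|$, as follows from \eqref{M19a} together with the uniform bound $\|E_m(\vec t)\|_1\leq 2$ proved in \eqref{estimate of E} and its analogue for $E_{\tilde W}(\vec t)$ derived from \eqref{ii-b}). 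Hence $\||\psi_m|^2-|\psi|^2\|_*\to 0$.

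Combining these facts, $V+\sigma|\psi_m|^2\to V+\sigma|\psi|^2$ in $\|\cdot\|_*$, so $W_{m+1}\to\mathcal{M}W$ in $\|\cdot\|_*$. Since also $W_{m+1}\to W$ by Corollary \ref{cauchy sequence 2l>n}, uniqueness of the limit gives $\mathcal{M}W=W$. There is no genuine obstacle here; the only place one must be slightly careful is confirming the uniform bound on $\|\psi_m\|_*$ so that the difference of squares can be dominated, but this is immediate from the estimates on $\|E_m(\vec t)\|_1$ already proved in Lemma \ref{main lemma  1 2l>n} and Lemma \ref{lemma3 2l>n}.
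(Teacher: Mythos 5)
Your proof is correct and takes essentially the same route as the paper: pass to the limit in $W_{m+1}=\mathcal{M}W_m$, using the $\|\cdot\|_*$-convergence $\psi_m\to\psi$ (Lemma~\ref{lemma3 2l>n'}) together with the uniform bound $\|\psi_m\|_*\le 2|A|$, and then invoke uniqueness of limits. The only cosmetic difference is that you bound $\||\psi_m|^2-|\psi|^2\|_*$ via the decomposition $(\psi_m-\psi)\bar\psi_m+\psi(\bar\psi_m-\bar\psi)$, whereas the paper uses the real-part identity $|\psi_m|^2-|\psi|^2=\Re\bigl((\psi_m-\psi)(\bar\psi_m+\bar\psi)\bigr)$; both give the same estimate.
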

\begin{proof}[Proof of Corollary \ref{AW=W 2l>n}] 
Considering as in (\ref{**}), we obtain:
\begin{equation}
\big\|{\mathcal M}W_{m}-{\mathcal M}W_{}\big\|_{*} 
\leq |\sigma|\big\|\psi_{{m}}-\psi \big\|_{*}\big\|\bar{\psi}_{{m}}+\bar{\psi}\big\|_{*},\label{****}
\end{equation}
It immediately follows from Lemma \ref{lemma3 2l>n} that ${\mathcal M}W_{m}\rightarrow {\mathcal M}W $
 with respect to $\|\cdot \|_*$.
Now, by (\ref{def of successive sequence A 4l>n+1}) and \eqref{****}, we have  ${\mathcal M}W=W$.
\end{proof} 
Let $\lambda_{{m}}(\vec t)$,  $\lambda_{\tilde W}(\vec t)$ be the eigenvalues \eqref{series pj 4l>n+1} corresponding to $\tilde W_m$ and $\tilde W$, respectively.
\begin{lemma} \label{lemma lambda 2l>n} Under conditions of Lemma \ref{lemma3 2l>n}
the sequence
 $\lambda_{{m}}(\vec t)$ converges to $\lambda_{\tilde W}(\vec t)$ being given by \eqref{series pj 4l>n+1} and
 \begin{equation}
|g_r(k,\vec t)|<r^{-1}k^{2l-n-\delta} \left(4\|V\|_*k^{-\gamma_{2}}\right)^{r}, \label{gr}
\end{equation}
where $g_r$ is given by  \eqref{series pj formula 4l>n+1} with $\tilde W$ instead of $V$.
\end{lemma}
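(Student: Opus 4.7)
The plan is to mirror the proof of Lemma \ref{lemma3 2l>n}, now for the eigenvalue series \eqref{series pj 4l>n+1} rather than the spectral projection series \eqref{series E 4l>n+1}.

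For the estimate \eqref{gr}: using the cyclic property of the trace, one has
\begin{equation*}
g_r(k,\vec t) = \frac{(-1)^r}{2\pi i\, r}\oint_{C_0}\Tr\bigl((H_0(\vec t)-z)^{-1}\tilde W\bigr)^r\, dz,
\end{equation*}
and writing the trace in the Fourier basis of $L^2(Q)$ gives a multi-sum over $(j_1,\ldots,j_r)\in(\Z^n)^r$ (cyclically constrained) with denominators $\prod_{s=1}^r(p_{j_s}^{2l}(\vec t)-z)$ and numerators $\prod_{s=1}^r \tilde w_{j_s-j_{s+1}}$. Pairing adjacent resolvent factors $(p_{j_s}^{2l}-z)(p_{j_{s+1}}^{2l}-z)$ and invoking the non-resonance inequality \eqref{main} (each pair $\geq k^{2\gamma_2}/200$), bounding the Fourier part by $\|\tilde W\|_*^r \leq (2\|V\|_*)^r$ via \eqref{mmm}, and using $4l>n+1$ to secure convergence of the sum over the base index (as in the derivation of \eqref{ii-a} in \cite{K97}), I would obtain the uniform bound $|\Tr((H_0-z)^{-1}\tilde W)^r| \leq (4\|V\|_*k^{-\gamma_2})^r$ on $C_0$. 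Since the contour $C_0$ has length $2\pi k^{2l-n-\delta}$, estimate \eqref{gr} follows. Moreover, $4\|V\|_*k^{-\gamma_2} < 1$ for $k > k_1$ by \eqref{k_{1}}, so the series \eqref{series pj 4l>n+1} defining $\lambda_{\tilde W}(\vec t)$ converges absolutely.

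For the convergence $\lambda_m(\vec t)\to\lambda_{\tilde W}(\vec t)$, I would apply the telescoping identity
\begin{equation*}
\bigl((H_0-z)^{-1}\tilde W_m\bigr)^r - \bigl((H_0-z)^{-1}\tilde W\bigr)^r = \sum_{s=0}^{r-1}\bigl((H_0-z)^{-1}\tilde W_m\bigr)^s (H_0-z)^{-1}(\tilde W_m-\tilde W)\bigl((H_0-z)^{-1}\tilde W\bigr)^{r-1-s},
\end{equation*}
take the trace term by term, and re-run the pairing argument above with one $\tilde W$ or $\tilde W_m$ replaced by $\tilde W_m - \tilde W$. This yields
\begin{equation*}
|g_r^{(m)}(k,\vec t)-g_r(k,\vec t)| \leq C\, k^{2l-n-\delta}(4\|V\|_*k^{-\gamma_2})^{r-1}\|\tilde W_m-\tilde W\|_*.
\end{equation*}
Summing the resulting geometric series in $r$ and using $\|\tilde W_m-\tilde W\|_* \to 0$ from Corollary \ref{cauchy sequence 2l>n} delivers $\lambda_m\to\lambda_{\tilde W}$.

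The main obstacle is the trace bound $|\Tr((H_0-z)^{-1}\tilde W)^r|\leq (4\|V\|_*k^{-\gamma_2})^r$ itself. It cannot be deduced from the column-sum operator estimate on $B(z)$ from \eqref{M17c-1} alone, because a single resolvent factor is not summable over $j\in\Z^n$ when $2l\leq n$ (the case $l=1,n=2$ is precisely borderline). One must instead exploit \eqref{main} pairwise along the cyclic index chain, and use $4l>n+1$ to tame the tail contribution from large $|j|$, paralleling the detailed calculation behind \eqref{ii-a} in \cite{K97}.
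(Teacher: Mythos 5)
The key obstacle you flag at the end --- that $|\Tr(B(z)^r)|$ cannot be bounded by powers of $\|B(z)\|_1$ alone, because a single resolvent is not summable over $\Z^n$ when $2l\leq n$ --- is exactly the crux, but the paper dissolves it by a much lighter operator-theoretic trick rather than by re-running the lattice-point pairing analysis. Write $I=E_0+E_1$ with $E_0=E_j$ (rank one) and $E_1=I-E_j$. Because the only pole of $(H_0(\vec t)-z)^{-1}$ inside $C_0$ sits on the $j$-th mode, $(E_1B(z)E_1)^r$ is holomorphic inside $C_0$, so
\begin{equation*}
\oint_{C_0}\Tr\bigl(E_1 B(z) E_1\bigr)^r\,dz=0,
\end{equation*}
and one may replace the integrand $\Tr(B^r)$ by $\Tr\bigl(B^r-(E_1BE_1)^r\bigr)$. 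Expanding $B^r$ via $I=E_0+E_1$, every surviving term $E_{i_1}BE_{i_2}\cdots BE_{i_{r+1}}$ contains at least one rank-one factor $E_0$, hence lies in $S_1$ with $\|\cdot\|_{S_1}\leq\|B(z)\|^r$. The Schur-type bound $\|B(z)\|\leq\|B(z)\|_1^{1/2}\|B(\bar z)\|_1^{1/2}$ then reduces everything to the column-sum estimate \eqref{M17c-1} already established for $\tilde W$, and \eqref{gr} follows after absorbing the $2^{r+1}$ count. Nothing about the Fourier structure of $\tilde W$ beyond $\|\tilde W\|_*\leq 2\|V\|_*$ is used.

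Your route, by contrast, would require re-deriving the pointwise trace bound by a cyclic-index pairing in the Fourier basis, and here there is a real gap you do not address: $\tilde W$ is \emph{not} a finite trigonometric polynomial. Its Fourier support is unbounded, so the non-resonance inequality \eqref{main} --- which you rely on to make each paired denominator $\geq k^{2\gamma_2}/200$ --- is only available for the modes $|q|<k^\beta$. For $|q|\geq k^\beta$ one has only the crude lower bound $k^{2(2l-n-\delta)}$ from \eqref{in-a}, which is far weaker when $2l\leq n$; one would then have to invoke the smallness of the high-frequency tail of $\tilde W$ (controlled only through $\|\tilde W-V\|_*\lesssim|\sigma||A|^2\|V\|_*k^{-\gamma_2}$) and split the sum accordingly, which your sketch does not do. Moreover the lattice-point counting that makes the near-sphere contribution acceptable is precisely the content of the non-resonant-set construction in \cite{K97}, and verifying it for a potential with unbounded Fourier support is a substantial undertaking --- this is what Remark~\ref{est dependent V-1} warns about. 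Your telescoping argument for $\lambda_m\to\lambda_{\tilde W}$ is fine in spirit (the paper uses a closely related one), but the centerpiece estimate \eqref{gr} needs the paper's $E_0/E_1$ decomposition, or equivalent, to be obtained without heavy lattice analysis.
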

\begin{proof}
By perturbation theory, $\lambda _{\tilde W}(\vec t)$ is the limit of  $\lambda _m(\vec t)$ as $m\to \infty $. Let us show that the series  \eqref{series pj  4l>n+1} converges.
Let us consider two projections $E_0=E_j$, $E_1=I-E_j$, here $E_j$ is the spectral projection of $H_0$, see \eqref{series E formula 4l>n+1}. Note that
$$\oint _{C_0} \left(E_1B(z)E_1\right)^r dz=0,\ \ r=1,2,...,$$
since the integrand is holomorphic inside $C_0$. Hence,
\begin{align}\label{LAMBDA}
\nonumber& \oint _{C_0} B(z)^r dz =\oint _{C_0} \big(B(z)^r - \left(E_1B(z)E_1\right)^r\big)dz=\\
\nonumber& \sum _{i_1,...,i_{r+1}=0,1, \exists s: i_s=0}\oint _{C_0 }E_{i_1}B(z)E_{i_2}B(z)....E_{i_r}B(z)E_{i_{r+1}}dz. \end{align}

Obviously, $E_{i_1}B(z)_{i_2}B(z)....E_{i_r}B(z)E_{i_{r+1}}$ is in the trace class $S_1$ if at least one index $i_s$,  $1\leq s\leq r+1$ is zero, since $E_0 \in S_1$. Notice that for the adjoint operator $B^*$ we have $B^*(z)=B(\bar z)$. It follows:
$$\|E_{i_1}B(z)E_{i_2}B(z)....E_{i_r}B(z)E_{i_{r+1}}\|_{S_1}\leq \|B\|^r\leq \|B^*\|_1^{r/2}\|B\|_1^{r/2}<\left(2\|V\|_*k^{-\gamma_{2}}\right)^{r}.$$
Now, we easily obtain \eqref{gr}.
\end{proof}
Considering as in the proof of Theorem \ref{2.3}, one can prove an analogous theorem:

\begin{theorem} \label{2.3-a} Under the conditions of Lemma \ref{lemma3 2l>n} the series (\ref{series pj 4l>n+1}), (\ref{series E 4l>n+1}) for the potential $\tilde W$
can be differentiated with respect to $\vec t$ any number of times, and 
they retain their asymptotic character. Coefficients $g_r(k,\vec t)$ and 
$G_r(k,\vec t)$ satisfy the following estimates in the $(k^{-n+1-2\delta })$-neighborhood in $\C^n$ of the nonsingular set $\chi _1(k,\beta,\delta )$:
\begin{equation}
\mid T(m)g_r(k,\vec t)\mid <m!k^{2l-n-\delta}\left(4\|V\|_*k^{-\gamma_{2}}\right)^{r}k^{\mid m\mid (n-1+2\delta )}, \label{2.2.32a-1}
\end{equation}
\begin{equation}
\| T(m)G_r(k,\vec t)\|_{1}<m!(2\|V\|_*k^{-\gamma_{2}})^{r}k^{\mid m\mid (n-1+2\delta )}. \label{2.2.33a-1}
\end{equation}
\end{theorem}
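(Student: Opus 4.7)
The plan is to imitate the proof of Theorem \ref{2.3} from \cite{K97} essentially verbatim, with $V$ replaced by $\tilde W$ throughout. The crucial observation is that the whole apparatus in \cite{K97} rests on the lower bounds \eqref{in-a} and \eqref{main}, and these bounds depend only on the free operator $H_0(\vec t)$, not on the potential. Therefore the only substitution needed is to replace the potential-dependent bound $\hat v k^{-\gamma_2}$ on $B_0(z) = (H_0-z)^{-1/2}V(H_0-z)^{-1/2}$ by the bound $2\|V\|_* k^{-\gamma_2}$ on $B(z) = (H_0-z)^{-1/2}\tilde W(H_0-z)^{-1/2}$, which is already established in \eqref{M17c-1}.

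The first step is analytic continuation in $\vec t$. I extend $\vec t$ into the complex $(k^{-n+1-2\delta})$-polydisc neighborhood of $\chi_1(k,\beta,\delta)$ in $\mathbb{C}^n$. Since the functions $p_i^{2l}(\vec t)$ are entire and have gradient of order $k^{2l-1}$ on the real nonsingular set, a complex perturbation of size $k^{-n+1-2\delta}$ changes each $p_i^{2l}(\vec t)$ by at most $O(k^{2l-n-2\delta})$, which is much smaller than the right-hand sides $k^{2l-n-\delta}$ and $k^{2\gamma_2}$ in \eqref{in-a} and \eqref{main}. Hence these inequalities persist in the complex neighborhood, perhaps with an extra factor of $1/2$. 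This argument is identical to the one carried out in \cite{K97} for the proof of Theorem \ref{2.3}.

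Combining this with \eqref{M17c-1} yields $\|B(z)\|_1 \leq 2\|V\|_* k^{-\gamma_2}$ for $z\in C_0$ uniformly in $\vec t$ in the complex neighborhood. Therefore the series \eqref{series pj 4l>n+1} and \eqref{series E 4l>n+1} (with $\tilde W$ in place of $V$) define analytic functions of $\vec t$ there, with the bounds $|g_r(k,\vec t)|<r^{-1}k^{2l-n-\delta}(4\|V\|_*k^{-\gamma_2})^r$ and $\|G_r(k,\vec t)\|_1 \leq (2\|V\|_*k^{-\gamma_2})^r$ valid throughout the complex neighborhood, by the same argument as in Lemmas \ref{lemma lambda 2l>n} and \ref{lemma3 2l>n}. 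Cauchy's integral formula for derivatives of analytic functions of several complex variables on a polydisc of polyradius $\rho = k^{-n+1-2\delta}$ then gives, for any such analytic function $f$,
\[
|T(m) f(\vec t)| \leq m!\,\rho^{-|m|}\sup_{\vec t'}|f(\vec t')| = m!\,k^{|m|(n-1+2\delta)}\sup_{\vec t'}|f(\vec t')|.
\]
Applied to $g_r$ this yields \eqref{2.2.32a-1}, and applied entrywise in the trace class (noting that holomorphy of $G_r(k,\vec t)$ in $\vec t$ transfers to each matrix entry while the $\|\cdot\|_1$ bound is preserved under the Cauchy estimate applied column by column) yields \eqref{2.2.33a-1}.

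The main obstacle, namely the extension of the nonresonant lower bounds into a complex neighborhood of $\chi_1(k,\beta,\delta)$, has already been overcome in \cite{K97}, and it does not involve the potential at all. The only genuinely new input is the uniform bound $\|\tilde W\|_* \leq 2\|V\|_*$ coming from Lemma \ref{main lemma  1 2l>n} (which gives $\|\tilde W - V\|_* \leq 8|\sigma||A|^2\|V\|_* k^{-\gamma_2}$, much smaller than $\|V\|_*$), and this is a purely algebraic substitution. Therefore the proof reduces to a careful bookkeeping of constants in the argument of \cite{K97}.
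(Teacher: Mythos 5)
Your proposal is correct and takes essentially the same route the paper does: the paper simply cites the argument of Theorem~\ref{2.3} (from \cite{K97}), which is precisely the analytic-continuation-plus-Cauchy-estimate argument you spell out, with the only new ingredient being the substitution $\hat v k^{-\gamma_2}\mapsto 2\|V\|_*k^{-\gamma_2}$ via \eqref{M17c-1}. Two small notational points: the norm $\|\cdot\|_1$ in \eqref{2.2.33a-1} is the column-sum norm of Remark~\ref{r}, not the trace norm, though your column-by-column Cauchy argument is exactly right for it; and as you yourself note, the supremum bounds on $g_r$, $G_r$ must be taken over a slightly larger complex polydisc so that the Cauchy formula with polyradius $k^{-n+1-2\delta}$ can be applied throughout the stated neighborhood.
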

\begin{corollary} \label{derivatives-1} There are the estimates for the perturbed eigenvalue and its
spectral projection:
\begin{equation}
\left| T(m)\big(\lambda_{\tilde W} (\vec t)-p_j^{2l}(\vec t)\big)\right| <C(\|V\|_*)m!k^{(n-1+2\delta)|m|}k^{2l-n-\delta-2\gamma _2}, \label{2.2.34b-1}
\end{equation}
\begin{equation} 
\| T(m)(E_{\tilde W}(\vec t)-E_j)\|_{1} <C(\|V\|_*)m!k^{(n-1+2\delta)|m|}k^{-\gamma _2}.\label{2.2.35-1}
\end{equation}
In particular,
\begin{equation}
\left| \lambda _{\tilde W}(\vec t)-p_j^{2l}(\vec t)\right| <C(\|V\|_*)k^{2l-n-\delta-2\gamma _2}, \label{2.2.34b-3}
\end{equation}
\begin{equation} 
\| E_{\tilde W}(\vec t)-E_j\|_{1} <C(\|V\|_*)k^{-\gamma _2}, \label{2.2.35-3}
\end{equation}
\begin{equation}
\left|\nabla\lambda _{\tilde W}(\vec t)-2l \vec p_j(\vec t)  p_j^{2l-2}(\vec t) \right|<C(\|V\|_*)k^{2l-1-2\gamma _2+\delta } .\label{2.2.34b-2}
\end{equation}
\end{corollary}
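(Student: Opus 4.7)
The plan is to deduce all four inequalities directly from Theorem \ref{2.3-a} by term-by-term differentiation and summation of the perturbation series for the operator with potential $\tilde W$. By the analogue of \eqref{series pj 4l>n+1} and \eqref{series E 4l>n+1} established in Lemma \ref{lemma lambda 2l>n} and Lemma \ref{lemma3 2l>n}, we have
\[
\lambda_{\tilde W}(\vec t) - p_j^{2l}(\vec t) = \sum_{r=2}^{\infty} g_r(k,\vec t), \qquad E_{\tilde W}(\vec t) - E_j = \sum_{r=1}^{\infty} G_r(k,\vec t),
\]
where $g_r$ and $G_r$ are now built from $\tilde W$. Since $k > k_1(\|V\|_*,\delta,\beta)$, the factor $4\|V\|_* k^{-\gamma_2}$ appearing in \eqref{2.2.32a-1}--\eqref{2.2.33a-1} is smaller than $1/2$, so the series converge absolutely after differentiation, and all termwise estimates may be summed.

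For \eqref{2.2.34b-1}, I apply $T(m)$ to the eigenvalue series and invoke \eqref{2.2.32a-1} term by term:
\[
|T(m)(\lambda_{\tilde W}(\vec t)-p_j^{2l}(\vec t))| \leq \sum_{r=2}^{\infty} m!\,k^{2l-n-\delta}(4\|V\|_*k^{-\gamma_2})^r k^{|m|(n-1+2\delta)}.
\]
The geometric sum is bounded by twice the $r=2$ term, which produces exactly $m!\,k^{(n-1+2\delta)|m|}k^{2l-n-\delta-2\gamma_2}$ up to a constant $C(\|V\|_*)$. The same scheme applied to $E_{\tilde W}(\vec t)-E_j$ using \eqref{2.2.33a-1} gives \eqref{2.2.35-1}, now with the geometric tail dominated by the $r=1$ term, yielding the factor $k^{-\gamma_2}$.

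The pointwise estimates \eqref{2.2.34b-3} and \eqref{2.2.35-3} are immediate specializations of \eqref{2.2.34b-1} and \eqref{2.2.35-1} at $|m|=0$. For the gradient bound \eqref{2.2.34b-2}, observe that $\nabla p_j^{2l}(\vec t) = 2l\,\vec p_j(\vec t)\,p_j^{2l-2}(\vec t)$, so
\[
\nabla \lambda_{\tilde W}(\vec t) - 2l\,\vec p_j(\vec t)\,p_j^{2l-2}(\vec t) = \sum_{r=2}^{\infty} \nabla g_r(k,\vec t),
\]
and applying \eqref{2.2.32a-1} with $|m|=1$ and summing (again dominated by the $r=2$ term) yields $k^{2l-n-\delta}\,k^{-2\gamma_2}\,k^{n-1+2\delta} = k^{2l-1+\delta-2\gamma_2}$, which is precisely the claimed rate.

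There is no real obstacle: the entire argument is a mechanical transcription of the proof of Corollary \ref{derivatives} with $V$ replaced by $\tilde W$ and with the bounds of Theorem \ref{2.3-a} substituted for those of Theorem \ref{2.3}. The only subtlety worth checking is that the constant $C(\|V\|_*)$ arising from summing the geometric tails depends only on $\|V\|_*$ and not on $\|\tilde W\|_*$; this holds because the estimates in Theorem \ref{2.3-a} already absorb the $\tilde W$-dependence into the factor $\|V\|_*$ via the a priori bound $\|\tilde W\|_* \leq \|V\|_* + 8|\sigma||A|^2\|V\|_*k^{-\gamma_2} \leq 2\|V\|_*$ from Lemma \ref{main lemma  1 2l>n}.
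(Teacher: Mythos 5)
Your argument is correct and is exactly the one the paper implicitly intends: Corollary~\ref{derivatives-1} is deduced from Theorem~\ref{2.3-a} by termwise differentiation of the (absolutely convergent, for $k>k_1$) perturbation series and summation of the geometric tail, with the lowest-order surviving term ($r=2$ for $\lambda$, $r=1$ for $E$) dictating the rate, precisely as Corollary~\ref{derivatives} follows from Theorem~\ref{2.3}. Your closing remark about why $C$ depends only on $\|V\|_*$ — because $\|\tilde W\|_*\leq 2\|V\|_*$ via Lemma~\ref{main lemma  1 2l>n} has already been absorbed into the bounds of Theorem~\ref{2.3-a} — is the right thing to check and is handled correctly.
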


We have the following main result for the nonlinear polyharmonic equation with quasi-periodic condition. 

\begin{theorem}  \label{main theorem 4l>n+1}
Suppose $\vec t$ belongs to the $(k^{-n+1-2\delta })$-neighborhood in $K$ 
of the 
non-resonant set $\chi _1(k,\beta,\delta )$,  $k>k_{1}(\|V\|_{*},\delta,\beta)$ and $A\in \C: |\sigma||A|^2 <k^{\gamma_0-\delta }$. Then,
there is  a function $u(\vec{x})$, depending on $\vec t $ as a parameter, and a real value $\lambda(\vec t)$,
satisfying the  equation 
\begin{align}\label{main equation 4l>n+1}
(-\Delta)^{l}u(\vec{x})+V(\vec{x})u(\vec{x})+\sigma |u(\vec{x})|^{2}u(\vec{x})=\lambda u(\vec{x}),~\vec{x}\in 
Q,
\end{align}
and  the quasi-periodic boundary condition (\ref{main condition, 4l>n+1}). The following  formulas hold:
\begin{align} \label{solution construction 2l>n}
u( \vec{x})=&Ae^{i\langle{ \vec{p}_{j}(\vec t), \vec{x} }\rangle}\left(1+\tilde{u}( \vec{x})\right),\\ 
\lambda(\vec{t})=&p_j^{2l}(\vec t)+\sigma|A|^2+O\left( \left(k^{2l-n-\delta}+\sigma |A|^2 \right) k^{-2\gamma _2}\right),~ \label{kkk}
\end{align} 
where $\tilde{u}( \vec{x})$ is periodic and \begin{equation}
\|\tilde{u}\|_{*}\leq k^{-\gamma_{2}},\ \ \gamma_{2}>0~\mbox{ is defined by \eqref{gamma 2 4l bigger n+1}}. \label{June7}
\end{equation}
\end{theorem}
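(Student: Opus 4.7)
The plan is to assemble the nonlinear solution directly from the iteration scheme already built up in Lemmas \ref{main lemma  1 2l>n}--\ref{lemma lambda 2l>n}. Let $W := \lim_{m\to\infty}W_m$ in the $\|\cdot\|_*$-norm, guaranteed by Corollary \ref{cauchy sequence 2l>n}, let $\tilde W$ be its zero-mean shift as in \eqref{tilde W 4l>n+1}, and set $c := \frac{1}{(2\pi)^n}\int_Q W\,d\vec x$, so that $W = \tilde W + c$. Define $u(\vec x) := u_{\tilde W}(\vec x)$ (Definition \ref{def2}) and $\lambda(\vec t) := \lambda_{\tilde W}(\vec t) + c$. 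Since $V$ is real and $|u_m|^2$ is non-negative real, the sequence $W_m$ is real by induction, so $\tilde W$ is a real periodic function, the operator $(-\Delta)^l + \tilde W$ is self-adjoint, and therefore $\lambda_{\tilde W}$ (hence $\lambda$) is real.

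Verifying that $(u,\lambda)$ solves the nonlinear problem is then immediate from Corollary \ref{AW=W 2l>n}: the fixed-point identity $\mathcal M W = W$ unwinds to $V + \sigma|u|^2 = \tilde W + c$, i.e., $\tilde W = V+\sigma|u|^2 - c$. Substituting this into the linear eigenvalue equation $(-\Delta)^l u + \tilde W u = \lambda_{\tilde W} u$ yields
\begin{equation*}
(-\Delta)^l u + Vu + \sigma|u|^2 u = (\lambda_{\tilde W}+c)u = \lambda u,
\end{equation*}
which is \eqref{main equation 4l>n+1}. The quasi-periodic boundary conditions \eqref{main condition, 4l>n+1} hold automatically because $u_{\tilde W}$ is built via \eqref{def of u 2l>n} (with $\tilde W$ in place of $V$) from Bloch modes that satisfy them.

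For the structural formula \eqref{solution construction 2l>n}, rewrite \eqref{def of u 2l>n} as $u = A e^{i\langle \vec p_j(\vec t),\vec x\rangle}(1+\tilde u(\vec x))$, where the periodic part is
\begin{equation*}
\tilde u(\vec x) = \big(E_{\tilde W}(\vec t)\big)_{jj} - 1 + \sum_{q\neq 0}\big(E_{\tilde W}(\vec t)\big)_{j+q,j}\, e^{i\langle \vec p_q(\vec 0),\vec x\rangle}.
\end{equation*}
Thus $\|\tilde u\|_* \leq \|E_{\tilde W}(\vec t)-E_j\|_1$, and the bound \eqref{2.2.35-3} from Corollary \ref{derivatives-1} delivers \eqref{June7}, possibly after enlarging $k_1$ to absorb the constant $C(\|V\|_*)$.

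The only non-routine part is the sharp eigenvalue expansion \eqref{kkk}. Corollary \ref{derivatives-1} already gives $\lambda_{\tilde W}(\vec t) = p_j^{2l}(\vec t) + O\!\left(k^{2l-n-\delta-2\gamma_2}\right)$, so the task reduces to evaluating $c$ with an error of order $\sigma|A|^2 k^{-2\gamma_2}$. From $W = V + \sigma|u|^2$, the zero mean of $V$, and Parseval applied to \eqref{expression u 2l>n 1}, one obtains $c = \sigma|A|^2 \big(E_{\tilde W}(\vec t)\big)_{jj}$. The key observation --- and the main obstacle --- is that $(E_{\tilde W})_{jj} = 1 + O(k^{-2\gamma_2})$ rather than the weaker $1 + O(k^{-\gamma_2})$ supplied by the norm estimate alone: the $r=1$ contribution $(G_1)_{jj}$ is a contour integral of $(p_j^{2l}-z)^{-2}(\tilde W)_{jj}$, which vanishes because $\tilde W$ has zero mean; the $r\geq 2$ terms are bounded by \eqref{ii-b} as $O(k^{-2\gamma_2})$. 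Hence $c = \sigma|A|^2 + O(\sigma|A|^2 k^{-2\gamma_2})$, and summing with the estimate on $\lambda_{\tilde W}$ yields \eqref{kkk}.
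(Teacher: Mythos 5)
Your proof is correct and follows essentially the same route as the paper's own: take $u=u_{\tilde W}$, $\lambda=\lambda_{\tilde W}+\frac{1}{(2\pi)^n}\int_Q W$, use the fixed-point identity $\mathcal M W=W$ to convert the linear Bloch eigenvalue equation into \eqref{main equation 4l>n+1}, identify the mean of $W$ with $\sigma|A|^2(E_{\tilde W})_{jj}$ via Parseval and the projection property, and exploit $(G_1)_{jj}=0$ to get the $O(k^{-2\gamma_2})$ correction rather than $O(k^{-\gamma_2})$. The only material addition you supply beyond the paper's proof is the explicit reality argument for $\lambda$ and the contour-integral justification of $(G_1)_{jj}=0$, both of which the paper leaves tacit.
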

\begin{proof}
Let us consider the function $u$ given by Definition \ref{def2} and  the value $\lambda_{\tilde{W}}(\vec t)$.  
They solve the equation
\begin{align}\label{semi solution 2l>n}
(-\Delta)^{l}u_{}(\vec{x})+\tilde{W}(\vec{x})u_{}(\vec{x})=\lambda_{\tilde{W}}(\vec t) u_{}(\vec{x}),\ \ \  \vec{x}\in Q,
\end{align}
and $u$ satisfies the quasi-boundary condition (\ref{main condition, 4l>n+1}).
By Corollary \ref{AW=W 2l>n}, we have
$$W(\vec{x})={\mathcal M}W(\vec{x})=V(\vec{x})+\sigma|u(\vec{x})|^{2}.$$
Hence,
\begin{align}\label{tilde W 2l>n}
\nonumber\tilde{W}(\vec{x})=&W(\vec{x})-\frac{1}{(2\pi)^{n}}\int_{Q}W(\vec{x})d\vec{x}
=V(\vec{x})+\sigma|u(\vec{x})|^{2}-\sigma\|u\|_{L^{2}(Q)}^{2}.
\end{align}
Substituting the last expression into \eqref{semi solution 2l>n}, we obtain that  $u( \vec{x})$
satisfies \eqref{main equation 4l>n+1} with 
\begin{equation}\label{lambda06}
\lambda(\vec t)=\lambda_{\tilde{W}}(\vec t)+\sigma\|u\|_{L^{2}(Q)}^{2}
=\lambda_{\tilde{W}}(\vec t)+\sigma |A|^2\sum _{q\in Z^n} \big|\left(E_{\tilde W}\right)_{qj}\big|^2=\lambda_{\tilde{W}}(\vec t)+\sigma |A|^2\left(E_{\tilde W}\right)_{jj}.\end{equation}
Note that $(G_1)_{jj}=0$ and, therefore, $\left(E_{\tilde W}\right)_{jj}=1+O(k^{-2\gamma _2})$.
%
%
Further, by the definition of $u( \vec{x})$, we have
\begin{align}\label{u06}
&u( \vec{x}):=Ae^{i\langle{ \vec{p}_{j}(\vec t), \vec{x} \rangle}}\sum\limits_{q\in \Z^n}\left(E_{\tilde W}\right)_{q+j,j}e^{i\langle{p_q(0), \vec{x} \rangle}}.
\end{align} 
Using \ formulas \eqref{lambda06} and \eqref{u06} and estimates \eqref{2.2.34b-3} and \eqref{2.2.35-3}, we obtain (\ref{solution construction 2l>n}) and \eqref{June7}, respectively.
\end{proof}

\begin{lemma}\label{L:2.12} For any sufficiently large $\lambda $, every $A\in \C: |\sigma||A|^2 <k^{\gamma_0-\delta }$, $\lambda=k^{2l}$ and for every
$\vec{\nu}\in\mathcal{B} (\lambda)$, there is a
unique $\varkappa =\varkappa (\lambda , A, \vec{\nu})$ in the
interval $$
I:=[k-k^{-n+1-2\delta},k+k^{-n+1+2\delta}], $$ such that \begin{equation}\label{2.70}
\lambda (\varkappa \vec{\nu},A)=\lambda . \end{equation}
Furthermore, 
\begin{equation} \label{varkappa}
|\varkappa (\lambda , A, \vec{\nu}) - \tilde k| \leq C(||V||_*)\left(k^{2l-n-\delta}+|\sigma ||A|^2\right)k^{-2l+1-2\gamma_{2}}, \ \ \tilde k=(\lambda -\sigma |A|^2)^{1/2l}.\end{equation}
\end{lemma}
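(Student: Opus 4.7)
The plan is to mimic the proof of Lemma \ref{L:2.12a}, working along the ray $\vec k=\varkappa\vec\nu$ and invoking monotonicity of $\varkappa\mapsto\lambda(\varkappa\vec\nu,A)$ on $I$ to obtain existence, uniqueness, and a quantitative estimate for $\varkappa$. Fixing $\vec\nu\in\mathcal B(\lambda)$ and the corresponding index $j$ with $\vec p_j(\vec t)=\varkappa\vec\nu$, I set $F(\varkappa):=\lambda(\varkappa\vec\nu,A)$. Theorem \ref{main theorem 4l>n+1} then gives
\begin{equation*}
F(\varkappa)=\varkappa^{2l}+\sigma|A|^2+R(\varkappa),\qquad |R(\varkappa)|\leq C(\|V\|_*)\bigl(k^{2l-n-\delta}+|\sigma||A|^2\bigr)k^{-2\gamma_2},
\end{equation*}
so the equation $F(\varkappa)=\lambda$ is essentially $\varkappa^{2l}=\tilde k^{2l}-R(\varkappa)$, with the natural guess $\varkappa\approx\tilde k$.

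First I would verify $\tilde k\in I$. Since $|\sigma||A|^2<k^{\gamma_0-\delta}=k^{2l-n-3\delta}$, Taylor-expanding $(\lambda-\sigma|A|^2)^{1/2l}$ gives $|\tilde k-k|\leq C|\sigma||A|^2\,k^{1-2l}\leq Ck^{1-n-3\delta}$, which is well inside the half-width of $I$. Evaluating at $\tilde k$ shows $F(\tilde k)-\lambda=R(\tilde k)$, already of the size needed for \eqref{varkappa}.

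The heart of the argument is the lower bound $F'(\varkappa)\geq l\varkappa^{2l-1}$ on $I$, for then the intermediate value theorem supplies a unique $\varkappa\in I$ with $F(\varkappa)=\lambda$, and the mean-value inequality $|\varkappa-\tilde k|\leq|R(\tilde k)|/\inf_I F'$ gives exactly \eqref{varkappa}. Decomposing $\lambda=\lambda_{\tilde W}+\sigma|A|^2(E_{\tilde W})_{jj}$ via \eqref{lambda06}, the fixed-potential estimates \eqref{2.2.34b-2} and \eqref{2.2.35-1} yield $\vec\nu\cdot\nabla\lambda_{\tilde W}(\vec t)=2l\varkappa^{2l-1}+O(k^{2l-1-2\gamma_2+\delta})$ and $\vec\nu\cdot\nabla(E_{\tilde W})_{jj}(\vec t)=O(k^{n-1+2\delta-\gamma_2})$. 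The first dominates and the correction terms are swept up by the leading $2l\varkappa^{2l-1}\sim k^{2l-1}$ since $\gamma_2>0$ and $|\sigma||A|^2k^{n-1+2\delta-\gamma_2}\ll k^{2l-1}$.

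The main obstacle is that $\tilde W$ itself depends on $\vec t$, so strictly speaking Corollary \ref{derivatives-1} only controls the spectral data at a \emph{frozen} potential, whereas $F$ involves simultaneous variation of $\vec t$ and of $\tilde W=\tilde W(\vec t)$. The cleanest remedy is to promote the Cauchy-sequence argument of Lemma \ref{main lemma  1 2l>n} to first $\vec t$-derivatives: differentiate the iteration $\mathcal{M}W_m=W_{m+1}$ componentwise, apply Theorem \ref{2.3-a} to each $\tilde W_m$, and close the recursion inductively. This produces a bound $\|\partial_{t_s}\tilde W\|_*=O(|\sigma||A|^2\,k^{n-1+2\delta-\gamma_2})$, whose contribution to $F'(\varkappa)$ via first-order perturbation theory is again negligible compared to $2l\varkappa^{2l-1}$. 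Once this extension of Corollary \ref{derivatives-1} to the nonlinear $\lambda(\vec k,A)$ is in place, the monotonicity and the inverse-function estimate for $|\varkappa-\tilde k|$ proceed exactly as in Lemma \ref{L:2.12a}.
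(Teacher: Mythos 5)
Your argument is essentially the paper's: the authors likewise invoke the Implicit Function Theorem together with the derivative estimates \eqref{2.2.34b-1}, \eqref{2.2.34b-2} (and the decomposition \eqref{lambda06}/\eqref{kkk}) to treat $\varkappa\mapsto\lambda(\varkappa\vec\nu,A)$ as strictly monotone on $I$, obtaining existence, uniqueness, and the size of $\varkappa-\tilde k$ from the dominance of $2l\varkappa^{2l-1}$ in $F'$. The one place you go beyond the paper is in flagging that Corollary \ref{derivatives-1} bounds $\vec t$-derivatives at a \emph{frozen} potential $\tilde W$, while the constructed $\tilde W=\tilde W(\vec t)$ itself depends on $\vec t$ through the fixed point of $\mathcal M$; the paper elides this by declaring the proof ``completely analogous'' to the linear case, where $V$ is genuinely $\vec t$-independent. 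Your remedy — propagating the Cauchy-sequence estimates of Lemma \ref{main lemma  1 2l>n} to first $\vec t$-derivatives via Theorem \ref{2.3-a}, yielding $\|\partial_{t_s}\tilde W\|_*=O(|\sigma||A|^2 k^{n-1+2\delta-\gamma_2})$ and hence an implicit contribution to $F'$ that is $o(k^{2l-1})$ — is the correct way to close that loop and makes the IFT application fully rigorous.
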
 \begin{proof} 
Taking into account  \eqref{formulaB} and using   formulas \eqref{2.2.34b-1}, \eqref{2.2.34b-2}  and Implicit Function Theorem,
we prove the lemma.  The proof is completely analogous to that for a linear case. \end{proof}

\begin{theorem} \label{iso} \begin{enumerate} \item For any sufficiently
large $\lambda $ and every $A\in \C: |\sigma||A|^2 <k^{\gamma_0-\delta }$, the set $\mathcal{D}(\lambda , A)$, defined by \eqref{isoset} is a distorted
sphere with holes; it can be described by the formula
\begin{equation}  {\mathcal D}_{}(\lambda, A)=\{\vec k:\vec
k=\varkappa _{}(\lambda, A,\vec{\nu})\vec{\nu},
\ \vec{\nu} \in {\mathcal B}_{}(\lambda)\},\label{May20} \end{equation} where $
\varkappa (\lambda, A, \vec \nu)=\tilde k+h (\lambda, A, \vec \nu)$ and $h (\lambda, A, \vec \nu)$ obeys the
inequalities
\begin{equation}\label{2.75}
|h|<C(||V||_*)\left(k^{2l-n-\delta}+|\sigma ||A|^2\right)k^{-2l+1-2\gamma_{2}}<C(||V||_*)k^{-\gamma _1},
\end{equation}
with $\gamma _1=4l-2-\beta(n-1)-\delta >0$,
\begin{equation}\label{2.75'}\left|\nabla _{\vec \nu }h\right| <
C(||V||_*)k^{-\gamma_1+n-1+2\delta}=C(||V||_*)k^{-2\gamma_2+\delta}.
\end{equation}
\item The measure of $\mathcal{B}(\lambda)\subset S_{n-1}$ satisfies the
estimate
\begin{equation}\label{theta1}
L\left(\mathcal{B}\right)=\omega _{n-1}(1+O(k^{-\delta })).
\end{equation}


\item The surface $\mathcal{D}(\lambda ,A)$ has the measure that is
asymptotically close to that of the whole sphere of the radius $k$ in the sense that
\begin{equation}\label{2.77}
\bigl |\mathcal{D}(\lambda,A )\bigr|\underset{\lambda \rightarrow
\infty}{=}\omega _{n-1}k^{n-1}\bigl(1+O(k^{-\delta})\bigr).
\end{equation}
\end{enumerate}
\end{theorem}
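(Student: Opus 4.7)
\medskip

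\noindent\textbf{Proof proposal for Theorem \ref{iso}.}

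The plan is to follow the linear template of Lemma \ref{L:2.13a}, upgrading every input from the free/linear operator to its nonlinear counterpart, which has already been arranged in the excerpt. Part (1) is essentially a repackaging of Lemma \ref{L:2.12}: existence and uniqueness of $\varkappa(\lambda,A,\vec\nu)$ on the interval $I$ is exactly the content of that lemma, so $\mathcal{D}(\lambda,A)$ admits the parametrization \eqref{May20}. Writing $\varkappa = \tilde k + h$ with $\tilde k = (\lambda - \sigma|A|^2)^{1/2l}$, the pointwise bound \eqref{2.75} is precisely \eqref{varkappa}, combined with the elementary arithmetic $k^{2l-n-\delta}k^{-2l+1-2\gamma_2} = k^{-\gamma_1}$ using $\gamma_1 = 4l-2-\beta(n-1)-\delta$ from Lemma \ref{L:2.12a}; the extra $|\sigma||A|^2$ term is absorbed into the same $k^{-\gamma_1}$ bound thanks to the standing assumption $|\sigma||A|^2 < k^{\gamma_0-\delta}$.

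The estimate \eqref{2.75'} on $\nabla_{\vec\nu} h$ is the one genuine computation. I would differentiate the identity $\lambda(\varkappa(\lambda,A,\vec\nu)\vec\nu, A) = \lambda$ with respect to $\vec\nu$ using Lemma \ref{L:2.12} (so $\varkappa$ is $C^1$ in $\vec\nu$ by Implicit Function Theorem applied to $\lambda_{\tilde W}$, whose smoothness and derivative bounds come from Theorem \ref{2.3-a}). This yields
\begin{equation*}
\nabla_{\vec\nu}\varkappa \;=\; -\,\frac{\varkappa\bigl(\nabla_{\vec k}\lambda - \langle \nabla_{\vec k}\lambda,\vec\nu\rangle\vec\nu\bigr)\big|_{\vec k=\varkappa\vec\nu}}{\langle\nabla_{\vec k}\lambda,\vec\nu\rangle\big|_{\vec k=\varkappa\vec\nu}}.
\end{equation*}
By Corollary \ref{derivatives-1}, $\nabla_{\vec k}\lambda = 2l\,\vec p_j p_j^{2l-2} + O(k^{2l-1-2\gamma_2+\delta})$, so the denominator is of size $2l\,k^{2l-1}$ and the numerator is of size $k\cdot k^{2l-1-2\gamma_2+\delta}$ (the leading radial term cancels because $\vec p_j/|\vec p_j| = \vec\nu$ on $\mathcal D(\lambda,A)$, up to an $O(k^{-\gamma_1})$ angular error, which only improves the bound). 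Dividing gives $|\nabla_{\vec\nu}\varkappa| = O(k^{1-2\gamma_2+\delta})$, and since $\nabla_{\vec\nu}h = \nabla_{\vec\nu}\varkappa$, we obtain \eqref{2.75'}. This is the step I expect to be the main obstacle, chiefly because it requires careful cancellation of the large radial part of $\nabla_{\vec k}\lambda$ and a check that the $A$-dependent correction $\sigma|A|^2\bigl(E_{\tilde W}\bigr)_{jj}$ in \eqref{kkk} is $\vec t$-differentiable with acceptable bounds — this follows from Theorem \ref{2.3-a} applied to the series for $E_{\tilde W}$.

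Part (2), the measure estimate \eqref{theta1}, requires no new work: $\mathcal{B}(\lambda)$ does not depend on $A$ and is exactly the set defined in \eqref{formulaB}, whose measure was already estimated as $\omega_{n-1}(1+O(k^{-\delta}))$ in \eqref{B} (equivalently in Lemma \ref{L:2.13a}, part 2), coming from Lemma \ref{lemma t 4l>n+1} via the bound on $s(S_0(k)\setminus\chi_1(k,\beta,\delta))/s(S_0(k)) = O(k^{-\delta/8})$.

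For part (3), I would compute the surface area from the parametrization \eqref{May20}. The surface measure element on $\mathcal D(\lambda,A)$ is $\varkappa^{n-1}\sqrt{1+|\nabla_{\vec\nu}\varkappa|^2/\varkappa^2}\,d\sigma(\vec\nu)$, where $d\sigma$ is surface measure on $S_{n-1}$. From \eqref{2.75} and \eqref{2.75'} we get $\varkappa = k(1+O(k^{-\gamma_1-1})) = k(1+o(k^{-\delta}))$ and the Jacobian correction is $1 + O(k^{-4\gamma_2+2\delta})$, which is $1+o(k^{-\delta})$ under our parameter choices since $4\gamma_2 - 2\delta > \delta$ for small enough $\delta$. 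Integrating over $\mathcal B(\lambda)$ and using \eqref{theta1} yields \eqref{2.77}. The argument mirrors the linear one almost verbatim, with all constants now carrying a benign dependence on $\|V\|_*$ that doesn't affect the asymptotics.
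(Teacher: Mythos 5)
Your proposal follows the same route the paper indicates (the paper's proof is essentially the one-liner ``Implicit Function Theorem, analogous to Lemma 2.11 in \cite{K97}''), and you correctly supply the missing details: part (1) by repackaging Lemma \ref{L:2.12} together with \eqref{varkappa}; the gradient bound via differentiating the implicit relation and using Corollary \ref{derivatives-1}, with the key cancellation of the radial part of $\nabla_{\vec k}\lambda$; part (2) by noting $\mathcal{B}(\lambda)$ is $A$-independent and quoting \eqref{B}; and part (3) by the surface-area integral with the parametrization \eqref{May20}. This is the intended argument.

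One small point worth flagging: in the gradient step you compute $|\nabla_{\vec\nu}\varkappa| = O(k^{1-2\gamma_2+\delta})$ and then assert that this ``gives \eqref{2.75'},'' but \eqref{2.75'} asserts the bound $C(\|V\|_*)k^{-2\gamma_2+\delta}$ — smaller by a factor of $k$. The discrepancy is a normalization issue for $\nabla_{\vec\nu}$: the bound \eqref{2.75'}, inherited from Lemma \ref{L:2.13a} and from \cite{K97}, corresponds to the tangential derivative measured along the sphere of radius $k$ (i.e.\ $\varkappa^{-1}\partial_{\vec\nu}\varkappa$ up to $O(1)$ factors), not arc-length on $S_{n-1}$. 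Your downstream use in part (3) is in fact consistent with the correct bound because you plug in $|\nabla_{\vec\nu}\varkappa/\varkappa|^2 = O(k^{2\delta-4\gamma_2})$, but you should state the normalization explicitly so that \eqref{2.75'} is derived rather than contradicted. With that clarification the whole argument is sound.
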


\begin{proof} The proof is based on Implicit Function Theorem. It is completely analogous to
Lemma 2.11 in \cite{K97}.\end{proof}

\

\end{document}